\title{On Manipulating Weight Predictions in Signed Weighted Networks}
\title{On Manipulating Weight Predictions in Signed Weighted Networks}
\author {
    Tomasz Lizurej,\textsuperscript{\rm 1,2}
    Tomasz Michalak, \textsuperscript{\rm 1,2}
    Stefan Dziembowski \textsuperscript{\rm 1,2}
}
\newtheorem{thm}{Theorem}
\newtheorem{prob}{Problem}
\newtheorem{axiom}{Axiom}
\newcommand{\VC}{\mathit{VC}}
\newcommand{\NP}{\mathit{NP}}
\newcommand{\TP}{\mathit{TP}}
\newcommand{\FGA}{\mathit{FGA}}
\newcommand{\DMT}{\mathit{DMT}}
\newcommand{\IMT}{\mathit{IMT}}
\newcommand{\DNR}{\mathit{DNR}}
\newcommand{\INR}{\mathit{INR}}
\newcommand{\dom}{\mathit{dom}}
\newcommand{\pred}{\mathit{Pred}}
\newcommand{\succe}{\mathit{Succ}}
\newcommand{\indeg}{\mathit{indeg}}
\begin{document}

\maketitle
\begin{abstract}
Adversarial social network analysis studies how graphs can be rewired or otherwise manipulated to evade social network analysis tools. While there is ample literature on manipulating simple networks, more sophisticated network types are much less understood in this respect. In this paper, we focus on the problem of evading $\FGA$---an edge weight prediction method for signed weighted networks by \cite{kumar2016edge}. Among others, this method can be used for trust prediction in reputation systems. We study the theoretical underpinnings of $\FGA$ and its computational properties in terms of manipulability. Our positive finding is that, unlike many other tools, this measure is not only difficult to manipulate optimally, but also it can be difficult to manipulate in practice.
\end{abstract}

\section{Introduction} 
Adversarial social network analysis studies how networks can be rewired or otherwise manipulated to falsify network examination. In particular, many works in this body of research studied how to manipulate classic tools of social network analysis such as centrality measures~\citep{crescenzi2016greedily,bergamini2018improving,was2020manipulability}, and community detection algorithms~\citep{Waniek2018hiding,fionda2017community,chen2019ga}. Also, a rapidly growing body of works studies adversarial learning on graphs using deep learning ~\citep{chen2020survey}.

While most of the above literature focused on simple networks, in this paper, we consider a more complex model of weighted signed networks. 
In this class of networks, links are labeled with real-valued weights representing positive or negative relations between the nodes~\citep{leskovec2010predicting,leskovec2010signed,tang2016survey}. An important application of signed weighted networks is the modelling of trust networks/reputation systems, 
the goal of which is to avoid transaction risk by providing feedback
data about the trustworthiness of a potential business partner~\citep{resnick2000reputation}. 
As an example, let us consider the cryptocurrency trading platform Bitcoin OTC~\citep{kumar2016edge}. In this platform, users are allowed to rate their business partners on the scale $\{-10,-9,\ldots,10\}$, and the ratings are publicly available in the form of a who-trusts-whom network. A 6-node fragment of this network is presented in Figure~\ref{fig:intro}.

\begin{figure}[b]
    \centering
    \includegraphics[width=0.25\textwidth]{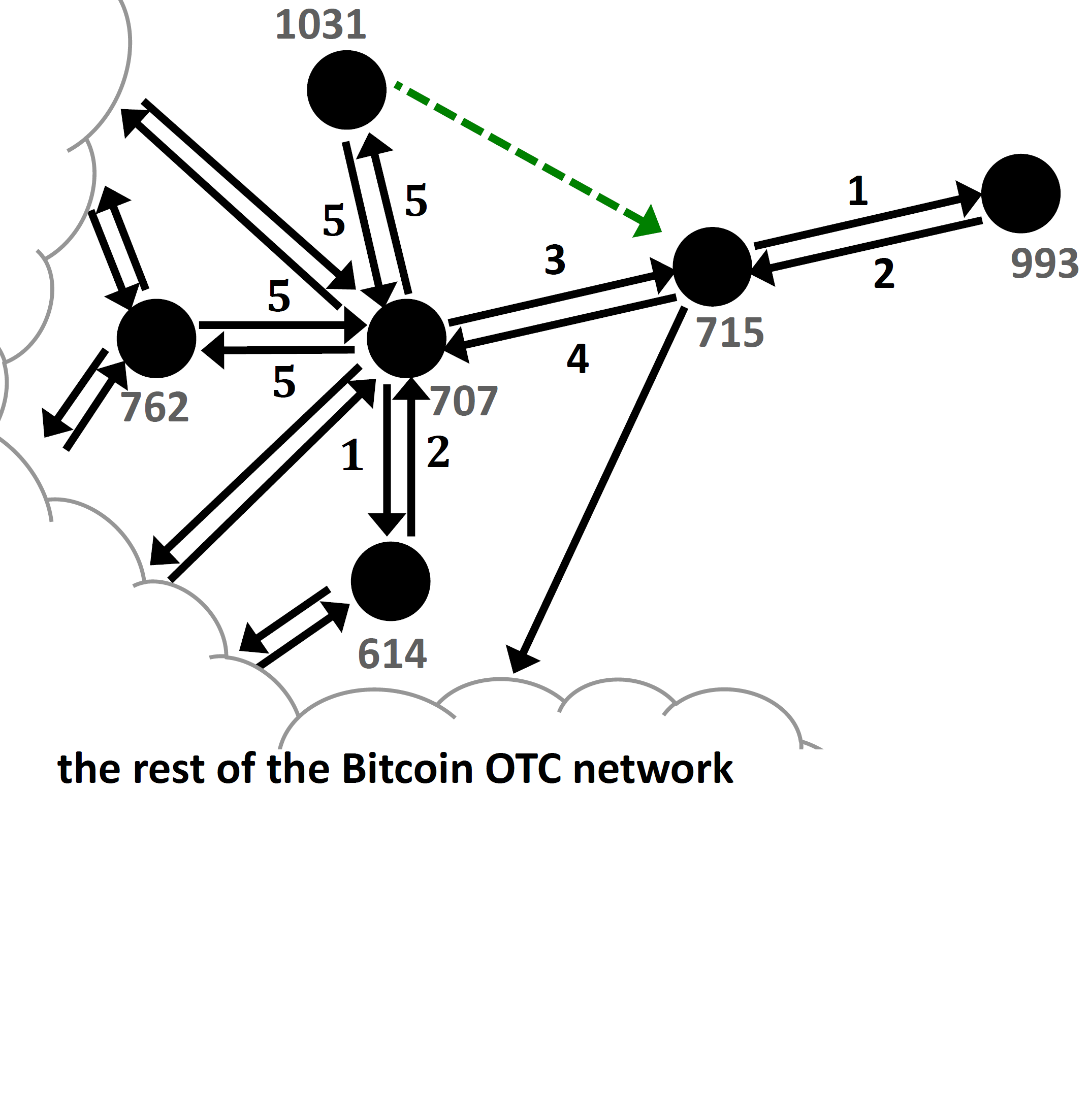}
\vspace{-1.3cm}
\caption{A fragment of the Bitcoin OTC network composed of nodes 993, 715, 707, 614, 1031, 762.}\label{fig:intro} 
\end{figure}

A user who thinks of doing a transaction with another user for the first time can use the information from such a who-trust-whom network to predict the potential risk. Technically, given a trust network modeled as a weighted signed network, predicting trust amounts to predicting the weights of potential new edges. A well-known edge weight prediction method, called $\FGA$, was proposed by \cite{kumar2016edge}. $\FGA$ is based on two measures of node behavior: the \textbf{goodness} that evaluates how much other nodes trust a given node, and the \textbf{fairness} that captures how fair this node is in rating other nodes. Both concepts have a mutually recursive definition that converges to a unique solution. Most 
importantly, Kumar et al. showed that $\FGA$ is effective in predicting edge weights, i.e., the level of trust between unlinked nodes. For example, in Figure~\ref{fig:intro}, the trust of node 1031 towards node 715 is predicted by $\FGA$ to be 2.26.


While $\FGA$ seems to be an interesting tool to apply in practice, little is known about its resilience to malicious behaviour. In this paper, we present the first study of manipulating the $\FGA$ function by a \textit{rating fraud}
~\citep{cai2016fraud,mayzlin2014promotional}. It involves fraudulent raters to strategically underrate or overrate other users for their own benefit. To magnify the strength of the manipulation, the attacker may create and act via multiple fake user identities. 
Such so called \textit{Sybil attacks} are especially tempting in environments such as cryptocurrency trading platforms where creating a new identity is affordable. Rating fraud attacks may be \textit{direct}---when targeted nodes are rated directly by the attackers---and \textit{indirect}---when the attackers try to manipulate the neighbourhood of the target nodes rather than the target nodes themselves (see Figure~2). 
It is important to distinguish between direct and indirect manipulations, as in some situations, only indirect ones will be practical. This may be the case on e-commerce platforms such as e-Bay, where nodes rate each other only after completing a transaction. When a retailer of expensive products is the target, the cost of a direct attack can be prohibitive. Hence, an indirect attack becomes an attractive alternative---it may be much cheaper to attack through the clients or business partners of such an expensive retailer (see the next section for an example).

Our contributions can be summarised as follows:
\begin{itemize}
    \item To analyze the theoretical underpinnings of the $\FGA$ measure, we propose the system of basic axioms for both fairness and goodness. We prove that together they uniquely determine the $\FGA$ measure;
    \item Next, we formulate the issue of manipulating the $\FGA$ measures of some target group of nodes as a set of computational problems. We then prove that all these problems are $\NP$-hard and $W[2]$-hard, i.e., $\FGA$ is, in general, hard to manipulate.
    \item Given the hardness of attacking a group of nodes, we then focus our analysis on targeting a single node - directly or indirectly. We first prove that direct attacks on a single node are easy, i.e., it is easy for an attacker to directly rate the target node to change the sign of  her \emph{goodness} value. As for an indirect attack, we show analitycally that for some class of networks (which we call \textit{minimum-$k$-neighbour graphs}, since we require that every node in this network has $indegree$ and $outdegree$ at least $k$), we can bound the strength of indirect attacks. Our positive finding is that, in this case, $\FGA$ measure turns out to be rather difficult to manipulate.
    \item In our experimental analysis, we first evaluate two benchmarks: (a) the strength of the aforementioned direct attack, and (b) the strength of an indirect attack based on a simple greedy approach. The latter one turns out to be very ineffective. Next, we analyse an improved greedy approach by attacking at a larger scale in every step. This approach, although costly, proves to be often effective.
\end{itemize}

\section{Preliminaries}
\noindent A Weighted Signed Network (WSN) is a directed, weighted graph $G = (V, E, W )$, where $V$  is a set of users, $E \subseteq  V\times V$ is a set of (directed) edges, and $\omega: E\rightarrow [ 1, +1]$ is a weight function that to each $(u,v) \in E$ assigns a value between $-1$ and $+1$ that represents how $u$ rates $v$. For any directed edge $(u,v) \in E$, let us denote by $\overline{(u,v)}$ the edge in the opposite direction, i.e., $\overline{(u,v)} = (v,u)$. For any set of directed edges $E$, denote by $\overline{E} = \{\overline{e}: e \in E\}$. Furthermore, let $P$ be a set of pairs of nodes of cardinality $n$, i.e, $P = \{\{u_1,v_1\},..., \{u_n, v_n\}\}$. The domain of $P$ is the set of nodes that make the pairs in $P$, i.e. 
$\dom(P)=\{u: u \in \{u,v\} \in P\}$. Finally, we write $\pred(v)$ (resp. $\succe(v)$) to denote the set of \textit{predecessors} (resp. \textit{successors}) of $v$ (resp. $u$) defined as follows: $\pred(v) = \{u: u \in (u,v) \in E\}$ (resp. $\succe(u) = \{v: v \in (u,v) \in E\}$). 

For a square matrix $M^{m \times m}$, we define $||M||_{\infty} = \max_{1 \leq i \leq m } \sum_{j=1}^m m_{ij}$, $||M||_{1} = \max_{1 \leq j \leq m } \sum_{i=1}^m m_{ij}$. It is also known that $||M \times M ||_{\infty} \leq ||M||_{\infty} \cdot ||M||_{1}$ and $||M \times M ||_{\infty} \leq ||M||_{1} \cdot ||M||_{1}$ (see \url{https://en.wikipedia.org/wiki/Matrix_norm}).

Kumar et al.~(\citeyear{kumar2016edge}) define a recursive function, $\FGA$, that assigns to each vertex of a weighted directed graph two values: \textit{fairness} and \textit{goodness}, $(f(v), g(v))$. The first one, $f(v)$, assigns a real value from range $[0,1]$ to $v$ that indicates how \emph{fair} this node is in rating other nodes. The second one, $g(v)$, assigns a value from range $[-r,r]$ to $v$ indicating how much \emph{trusted} this node is by other nodes (for simplicity we assume that $r=1$ in this paper). Finally we define an in-degree ($indeg(u)$) and out-degree ($outdeg(u)$) of a node $u \in V$. $indeg(u) = |\{(v,u): (v,u) \in E\}|$ and $outdeg(u) = |\{(u,v): (u,v) \in E\}|$.
Kumar et al.'s recursive formula for $(f(v), g(v))$ is as follows:
\begin{eqnarray}
g(v) = \frac{1}{indeg(v)}\sum_{u \in \pred(v)} f(u)\times \omega(u,v) \label{def:goodness}\\
f(u) = 1-\frac{1}{outdeg(u)}\sum_{v \in \succe(u)} \frac{|\omega(u,v)-g(v)|}{2}, \label{def:fairness} 
\end{eqnarray}
\noindent where $g(v) = 1$ for $v \in V$ with $indeg(v) = 0$, and $f(v) = 1$ for $v \in V$ with $outdeg(v)=0$. 

\begin{figure}[t]
    \centering
    \includegraphics[width=0.45\textwidth]{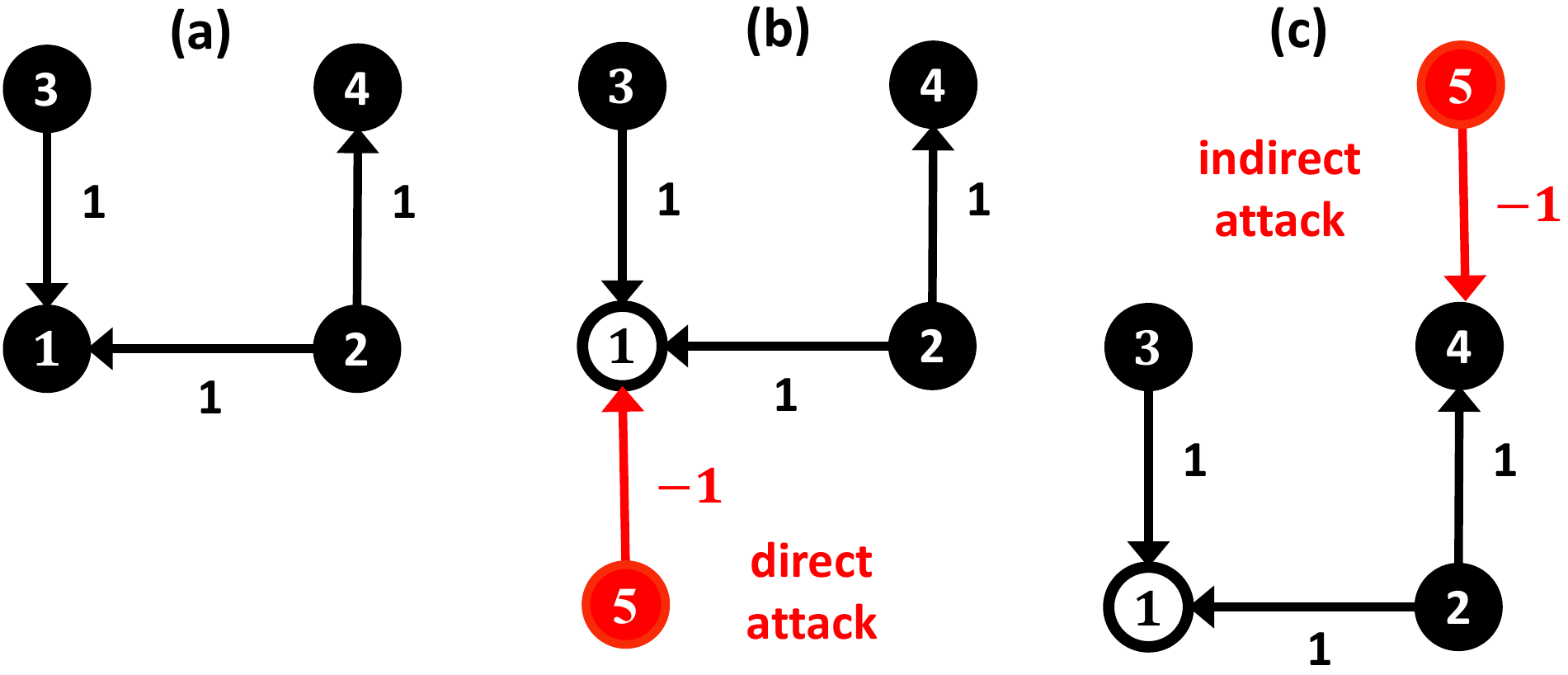}
\vspace{-0.1cm}
\begin{center}
\begin{tabular}{ccccccc}
\hline
Network &  \multicolumn{2}{c}{\textbf{(a)}} & \multicolumn{2}{c}{\textbf{(b)}} & \multicolumn{2}{c}{\textbf{(c)}}\\
\hline
node $v$   & $g(v)$ & $f(v)$  & $g(v)$ & $f(v)$ &  $g(v)$ & $f(v)$  \\ \hline
$1$ & $1$   &    $1$   &  $\mathbf{0.4}$& $1$       &  $\mathbf{0.83}$ & $1$   \\ 
$2$ & $1$   &    $1$   &  $1$& $0.8$       &  $1$  & $0.75$ \\ 
$3$ & $1$   &    $1$   &  $1$& $0.7$       &  $1$   & $0.92$ \\ 
$4$ & $1$   &    $1$   &  $0.8$& $1$       &  $0.17$   & $1$ \\ 
$5$ &    &       &  $1$& $0.3$       &  $1$   & $0.42$ \\ \hline
\end{tabular}
\end{center}
\caption{Sample networks and two types of attacks.}\label{fig:FGAexp} 
\end{figure}

Kumar et al.~(\citeyear{kumar2016edge}) showed that this function can be computed iteratively starting from $f^{(0)}(u)=g^{(0)}(u)=1$. Theorem 1 from the aforementioned work states that at each step, $t$, the estimated values $f^{(t)}(u)$,  $g^{(t)}(u)$ get closer to their limits $f^{(\infty)}(u)$,  $g^{(\infty)}(u)$,  i.e. we  have $|f^{(\infty)}(u)-f^{(t)}(u)| < \frac{1}{2^{t}}$ and $|g^{(\infty)}(u)-g^{(t)}(u)| < \frac{1}{2^{t-1}}$.
The $\FGA$ function can be used for predicting the weight of some not-yet existing (or unknown) edge $(u, v) \in V \times V \setminus E$ by computing the product: $\omega(u,v) = f(u) \times g(v)$. 

As an example of the $\FGA$ function and how it could be attacked, let us consider Figure~2. Network (a) is a benchmark, where every node rates others with the highest possible value.
In network (b), a new node 5 is used to perform a \textbf{direct attack} by rating node $1$ with the worst possible value of $-1$. This decreases the goodness of node $1$ to $0.4$. However, as argued in the introduction such a direct attack can be prohibitively costly. 
Nevertheless, given the definition of the $\FGA$ function, node $5$ can also perform an indirect attack on node $1$. This can be done, for instance, by directly attacking node $4$. As node $4$ has already been rated positively by node $2$, an opposite rating introduced by $5$ will decrease the fairness of $2$. In particular, comparing network~(c) to~(a) in Figure~\ref{fig:FGAexp}, the fairness of $2$ decreased from $1$ to $0.75$. This lower fairness means that node's $2$ ratings are less meaningful in network~(c) than in network~(a). Hence, the goodness of node $1$ decreases to $0.83$.

\section{Axiomatization}
Our first result is an axiom system that completely characterizes the $\FGA$. Below we present a comprehensive summary, while the details will are available in the appendix of the paper.

We begin with the characterization of the goodness part of the $\FGA$ function. Recall that the idea behind the goodness of $v$ is that it should reflect how this node is rated by its predecessors. Moreover, the ratings of the fairer predecessors should count more. We translate these high-level requirements into the following axioms:
\begin{itemize}
\item SMOOTH GOODNESS---let all predecessors of a particular node, $v \in V$, be unanimous in how they rate $v$ and let their fairness be the same. Now, let us assume that their fairness increases equally, i.e., intuitively, the nodes that rate $v$ become more trustworthy. Then, we require that this will result in an increase of the goodness of $v$, and that this increase is proportional to the increase of the fairness of $v$'s predecessors;
\item INCREASE WEIGHT---let the predecessors of $v$ be all equally fair and unanimous in how they rate $v$. Now, let them increase their rating of $v$ equally. Then, we require that the goodness of $v$ increases  and that this increase is proportional to the increase in how $v$ is rated;
\item MONOTONICITY FOR GOODNESS---the predecessors with higher fairness should have a bigger impact on the goodness of $v$. Similarly, higher weights should also have a bigger impact;
\item GROUPS FOR GOODNESS---let $v$ be rated by $k$ groups of the predecessors and let the nodes in each group be homogeneous and unanimous w.r.t. $v$. What is then the relationship between the impact these groups have on the goodness of $v$? 
In line with the previous axioms,
we require that the goodness of $v$ should be equal to the \emph{weighted average} of the ratings achieved when these groups separately rate $v$;
\item MAXIMAL TRUST---this basic condition requires that any if all the predecessors of $v$ have the highest possible fairness and their ratings are the highest possible, then the goodness of $v$ should be the highest possible;
\item BASELINE FOR GOODNESS---a non-rated node has the goodness of $1$.
\end{itemize}
Our first result is that the above axioms uniquely define the goodness part of the $\FGA$ function.

Let us now characterise the fairness part of the $\FGA$ function. Recall that the idea behind the fairness of $v$ is that it should reflect how the ratings given by this node agree with the ratings given by other nodes, i.e. how erroneous $v$ is. In this respect, we have the following axioms:
\begin{itemize}
    
    \item SMOOTH FAIRNESS---this axiom stipulates that the fairness of a node making an average error is an average of the fairness values of nodes making extreme errors;
    
    \item MONOTONICITY FOR FAIRNESS---our first axiom stipulates that the fairness of a node that rates more accurately than before should rise;
    \item GROUPS FOR FAIRNESS---if the nodes rated by $v$ can be divided into $k$ groups such that each node in a particular group is rated by $v$ in the same way, then the fairness of $v$ should be equal to the \emph{weighted average} of $v$'s fairness in a setting where $v$ rates these groups separately;
    \item OBVIOUS FAIRNESS METRIC---here, we stipulate that when a node makes maximal errors when rating all of its neighbors, then its fairness should be $0$, and when there is no error, then the fairness is $1$;
    \item BASELINE FOR FAIRNESS---the fairness of a node that rates noone is~1.
\end{itemize}
The above axioms uniquely define the fairness part of the $\FGA$ function. In summary, all the above axioms uniquely define the $\FGA$ function.
\begin{restatable}{thm}{axiomatization}
The SMOOTH GOODNESS, INCREASE WEIGHT, MONOTONICITY FOR GOODNESS, MAXIMAL TRUST, GROUPS FOR GOODNESS, BASELINE FOR GOODNESS axioms and the SMOOTH FAIRNESS, MONOTONICITY FOR FAIRNESS , OBVIOUS FAIRNESS METRIC, GROUPS FOR FAIRNESS, and BASELINE FOR FAIRNESS axioms uniquely define the $\FGA$ function.
\end{restatable}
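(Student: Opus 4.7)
The plan is to show that any pair $(f,g)$ satisfying the listed axioms must, at every vertex, coincide with the recursive expressions of Kumar et al.\ defining goodness and fairness. Since those equations admit a unique fixed point by Kumar et al.'s Theorem~1, this pins $(f,g)$ down uniquely. I would argue goodness and fairness separately, and in each block first nail down an \emph{elementary} (i.e., homogeneous) scenario and then reduce the general case via the corresponding GROUPS axiom.

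For goodness, BASELINE FOR GOODNESS covers the case $\mathit{indeg}(v)=0$. For $\mathit{indeg}(v)\geq 1$, consider the elementary setting in which all predecessors of $v$ share a common fairness $\varphi$ and rate $v$ with a common weight $\alpha$. Reading SMOOTH GOODNESS as a functional equation (a shift in the common fairness produces a proportional shift in $g(v)$) forces $g(v)$ to be affine in $\varphi$ for fixed $\alpha$; together with MAXIMAL TRUST (fixing $g(v)=1$ at $\varphi=\alpha=1$) and MONOTONICITY FOR GOODNESS (which excludes competing affine offsets that would reverse the order between neighbouring configurations), this pins $g(v)=\varphi$ when $\alpha=1$. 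The symmetric INCREASE WEIGHT argument yields $g(v)=\varphi\cdot\alpha$ in the elementary case. Finally, partitioning the predecessors of an arbitrary $v$ into singletons and applying GROUPS FOR GOODNESS recovers $g(v)=\frac{1}{\mathit{indeg}(v)}\sum_{u\in\pred(v)} f(u)\,\omega(u,v)$.

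For fairness, BASELINE FOR FAIRNESS handles $\mathit{outdeg}(u)=0$. In the elementary case where $u$ commits a common error $e=|\omega(u,v)-g(v)|$ on every successor, SMOOTH FAIRNESS gives affinity of $f(u)$ in $e$; OBVIOUS FAIRNESS METRIC provides the two anchors $e=0\Rightarrow f(u)=1$ and $e=2\Rightarrow f(u)=0$, while MONOTONICITY FOR FAIRNESS removes any wiggle room in between, yielding $f(u)=1-e/2$. GROUPS FOR FAIRNESS then reduces a general $u$ to a weighted average over singleton groups and reproduces the recursive formula for $f$. Having recovered both Kumar et al.\ equations from the axioms, uniqueness of the fixed point completes the argument.

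I expect the main obstacle to be the precise unpacking of the SMOOTH and MONOTONICITY axioms: the informal ``proportional'' and ``monotone'' language has to be turned into a genuine functional-equation argument that forces \emph{affinity} in the elementary scenario, rather than merely monotone interpolation between the anchor values. Once affinity is secured and the boundary values fixed by MAXIMAL TRUST and OBVIOUS FAIRNESS METRIC, the GROUPS axioms mechanically glue the elementary values into the Kumar et al.\ recursive formulas, and the rest of the proof is bookkeeping.
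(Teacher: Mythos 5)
Your proposal follows essentially the same route as the paper: its appendix proves uniqueness of the goodness part from SMOOTH GOODNESS/INCREASE WEIGHT plus MONOTONICITY via Cauchy's functional equation (with MAXIMAL TRUST fixing the constant), uniqueness of the fairness part from SMOOTH FAIRNESS plus MONOTONICITY via Jensen's equation (with OBVIOUS FAIRNESS METRIC fixing both coefficients), and then glues the homogeneous cases together with the GROUPS axioms and handles degree-zero nodes with the BASELINE axioms, exactly as you outline. The one nuance is that SMOOTH GOODNESS is an additivity (Cauchy-type) condition, so in the goodness block the elementary value is linear through the origin rather than merely affine---which is precisely what allows MAXIMAL TRUST to pin the slope---and this is the same point you flagged yourself about turning ``proportional'' into a genuine functional-equation argument.
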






\section{Complexity of attack}
Let us now study the complexity of manipulating $\FGA$.
\paragraph{Attack models}
Given $G=(V, E, \omega(E))$, let $A \subseteq V$ be a set of attackers. We define two types of the $A$'s objectives:

\begin{itemize}
    \item \textbf{targeting potential links} --- here, the target set $\TP$ is composed of disconnected pairs of nodes from $V\setminus A$:
    \begin{eqnarray}\label{eqn:TPdefinition}
    \TP \subseteq \left\{ \{u,v\}: u,v \in V  \setminus A \wedge (u,v),(v, u) \notin E\right\}.
    \end{eqnarray}
    Intuitively, the aim is to change the predicted weight of the potential links between the pairs from $\TP$.
    \item \textbf{targeting nodes} --- here, the target set is $T \subseteq V \setminus A$. Intuitively, the goal is to alter the targets' reputation.
\end{itemize}
The attackers can make the following types of moves:
\begin{itemize}
    \item \textbf{edge addition} --- the attackers can add an edge $(u,v)$ to $G$, where $u \in A$, $v \in T$, $(u,v) \notin E$, and with the weight $\omega(u,v) \in [-1,1]$. This corresponds to the attacker $u \in A$ rating node $v \in T$ for the first time.
    \item \textbf{weight update} --- an attacker $u \in A$ can update the weight of an existing edge $(u,v) \in G$ to some value $\omega(u,v) \in [-1,1]$. This corresponds to a modification of the existing rating by the attacker.
\end{itemize} 
All the attackers are allowed to make no more than $k$ such moves in total. We will refer to $k$ as a budget.\footnote{We place no constraints on how the attackers distribute this budget among themselves. In an extreme case, a single attacker can do all $k$ actions.} 

We will now formalize our computational problems. In the first one, the attackers aim at modifying the predicted weights between the pairs of nodes in $\TP$ to decrease them below (increase above) a certain threshold. This attack corresponds to breaking potential business connections.

\begin{prob}[DECREASE (INCREASE) MUTUAL TRUST, $\DMT$ ($\IMT$)]
Given a weighted signed network $G = (V, E, \omega)$,  a set of attacking nodes $A \subseteq V$, a target set of disconnected pairs of nodes $TP$ as defined in eq.~\ref{eqn:TPdefinition}, an intermediary set $I \subseteq V$, the budget $k$, and a threshold $t \in [-1,1]$, decide for all $\{u,v\} \in \TP$ whether it is possible to decrease (increase) the value of either predicted weight $f(u)\times g(v)$ or  $f(v) \times g(u)$ 
to or below (above) the threshold $t$ by making no more than $k$ edge additions or weight updates with the restriction that the attackers $u \in A$ are rating only the nodes from the intermediary set $I$.
\end{prob}

In the second problem, the attackers aim at  altering the goodness value of the nodes from a target set $T$. This attack corresponds to spoiling the reputation of the target nodes.
\begin{prob}[DECREASE (INCREASE) NODES RATING, $\DNR$ ($\INR$)]
Given WSN $G = (V, E, \omega)$,  a set of attackers $A \subseteq V$, a target set $T \subseteq V\setminus A$, an intermediary set $I \subseteq V$, the number of possible moves $k$, and threshold $t \in [-1,1]$,
decide whether it is possible, for all $v\in T$, to decrease (increase) the goodness of each vertex $v$ to or below (above) threshold $t$ by making no more than $k$ edge additions or weight updates with the restriction that the attackers $u \in A$ are rating only the nodes from the intermediary set $I$.
\end{prob}

\paragraph{Hardness Results}
We first consider $\DMT$ ($\IMT$).
\begin{restatable}{thm}{vertexcover}
\label{thm:vertexcover}
Solving the $\DMT (\IMT) = \left(G=(V, E, \omega),\right.$ $\left.A, \TP, I, t, k\right)$ problem is $\NP$-hard.
\end{restatable}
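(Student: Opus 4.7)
The plan is to prove NP-hardness by a polynomial-time reduction from the classical VERTEX COVER problem: given a graph $H = (V_H, E_H)$ and an integer $k$, decide whether $H$ has a vertex cover of size at most $k$. The intuition is that attacking an intermediary node will play the role of selecting a vertex into the cover, each target pair in $\TP$ will correspond to an edge of $H$, and a pair's predicted weight falls below the threshold if and only if at least one of the two ``endpoint'' intermediaries for that pair has been attacked.

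For the construction, for every $v \in V_H$ I would create an intermediary $i_v$ together with a private predecessor $p_v$ that initially rates $i_v$ with weight $+1$. For every edge $e = \{u,v\} \in E_H$ I would create two fresh nodes $x_e, y_e$ (disconnected from each other), add outgoing edges $(y_e, i_u)$ and $(y_e, i_v)$ both of weight $+1$, and place $\{x_e, y_e\}$ into $\TP$. The intermediary set is $I = \{i_v : v \in V_H\}$, the attacker set $A$ is a collection of fresh isolated nodes, the budget is $k$, and the threshold is $t = 3/4$.

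Next I would solve the $\FGA$ fixed-point equations on the gadget. In the untouched network, $g(i_v) = f(p_v) = 1$ for every $v$, so $f(y_e) = 1$ and every predicted weight equals $1$. A short computation on the triangle $(p_v, a, i_v)$ shows that a single weight-$(-1)$ addition from any attacker $a$ to $i_v$ drives the unique self-consistent solution to $g(i_v) = 0$, with $f(p_v) = f(a) = 1/2$; differentiating the fixed-point in the attack weight shows that $-1$ strictly minimises $g(i_v)$. Since $x_e$ has no in-edges and $f(x_e) = 1$ always, the only movable quantity in either predicted weight of $\{x_e, y_e\}$ is $f(y_e)$, which evaluates to $1$, $3/4$, or $1/2$ depending on whether none, exactly one, or both of $i_u, i_v$ have been attacked. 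Thus $f(y_e) \cdot g(x_e) \leq 3/4$ holds iff at least one of $i_u, i_v$ is attacked, and the $\DMT$ instance is solvable with $k$ moves iff $H$ has a vertex cover of size at most $k$. The $\IMT$ variant follows symmetrically by flipping the signs of all initial weights and choosing a matching threshold above $1/4$.

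The main obstacle is ruling out shortcuts for the attacker. I need to verify, via local fixed-point calculations on the gadget, that (i) an attack weight $w > -1$ is strictly dominated, so $-1$ is always optimal; (ii) concentrating multiple edge additions on the same intermediary (for instance, $g(i_v) = -1/3$ with two attacks, $-1/2$ with three, and so on) never covers more edge gadgets than a single attack already does, so stacking attacks is wasteful; and (iii) weight-update moves on freshly added attack edges cannot amplify the effect beyond what a single addition already achieves. Each of these reductions to routine algebra on the three-node gadget is straightforward, but together they are what guarantees the reduction is many-to-one and that the attacker cannot beat the vertex-cover lower bound.
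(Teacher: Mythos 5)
Your high-level strategy (reduce from \textsc{Vertex Cover}, let attacking an intermediary correspond to picking a cover vertex, and let each target pair correspond to an edge of $H$) is the same as the paper's, but your calibration of the gadget is broken. The clean fixed-point values you use ($g(i_v)=0$, $f(p_v)=f(a)=\tfrac12$, $f(y_e)\in\{1,\tfrac34,\tfrac12\}$) are computed on the isolated triangle $(p_v,a,i_v)$, ignoring that in your construction every $y_e$ with $v\in e$ is \emph{also} a predecessor of $i_v$ rating it $+1$. These extra positive raters prop $g(i_v)$ up, and the effect grows with $\deg_H(v)$. Already in the smallest instance ($H$ a single edge $\{u,v\}$, one attacker adding $(a,i_v)$ with weight $-1$), solving the actual fixed point gives $g(i_u)=\tfrac{15}{17}$, $g(i_v)=\tfrac{7}{17}$ and $f(y_e)=\tfrac{14}{17}\approx 0.82 > \tfrac34$, so the covered pair does \emph{not} reach your threshold $t=\tfrac34$ and the forward direction of your ``iff'' fails. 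For high-degree vertices the gap is worse: a single $-1$ attack moves $g(i_v)$, and hence $f(y_e)$, only by $O(1/\deg_H(v))$, so no fixed threshold works, and restoring the budget correspondence would require several attacks per intermediary, destroying the one-move-per-cover-vertex accounting. Fixing this would require either restricting to bounded-degree \textsc{Vertex Cover} and recomputing the (degree-dependent) fixed points to set the threshold, or redesigning the gadget.

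The paper sidesteps all of this fixed-point arithmetic with an extremal threshold: it sets $t=-1$, pre-wires one attacker $a_v$ with an existing $+1$ edge to each vertex $v$, takes $\TP=E(H)$ (these pairs are disconnected in the constructed graph, whose only edges are the attacker edges), and observes that since all nodes of $\dom(\TP)$ have outdegree $0$ their fairness is pinned at $1$, so reaching the prediction $-1$ forces $g(v)=-1$ exactly, which is possible only if \emph{every} edge into $v$ has weight $-1$; in particular the pre-existing $+1$ edge $(a_v,v)$ must be flipped, costing one move per chosen vertex. That is what makes the budget-$k$ correspondence with cover size $k$ immediate, and it also disposes of all the ``shortcut'' concerns (suboptimal weights, stacking attacks) that your write-up leaves as routine but which, with an interior threshold like $\tfrac34$, are exactly where the argument breaks.
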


\begin{figure*}[t]
    \centering
    \includegraphics[width=0.33\textwidth]{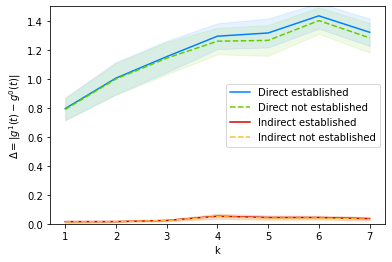}
    \includegraphics[width=0.33\textwidth]{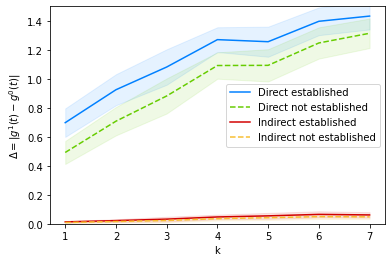}
    \includegraphics[width=0.33\textwidth]{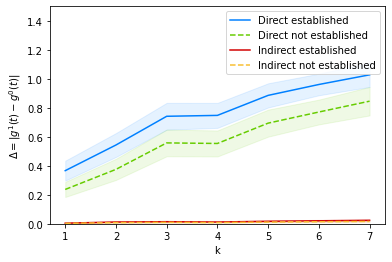}
    \caption{The comparison of direct/indirect, established/non-established attacks for Bitcoin OTC, Bitcoin Alpha and RFA Net networks.}
    \label{fig:dg}
\end{figure*}


\begin{algorithm}[t]
 \KwData{$A, T = \{t\}, G$}
 \For{$a \in A$}{
  add an edge $(a,t)$, $\omega(a,t) =-1$ to the graph $G$\;
 }
 \caption{Direct attack}
 \label{alg:direct}
\end{algorithm}
\begin{algorithm}[t]
 \KwData{$A, T = \{t\}, G$}
 sort nodes in $A$ by their fairness score\\
 \For{$a \in sorted(A)$}{
   $N_1 \leftarrow \pred(t)$\\
    find a neighbor $n_2 \in \succe(n_1) \setminus \{t\}$ of a neighbor $n_1 \in N_1$ that minimizes the goodness value of $t$, when adding  an edge $(a, n_2)$ with weight $\omega(a, n_2) = 1$ or $\omega(a, n_2) = -1$\\
    add the edge to the graph $G$\\
 }
 \caption{Indirect attack}
 \label{alg:indirect}
\end{algorithm}

    




\begin{restatable}{thm}{setcover}
\label{thm:dnr}
Solving the $\DNR (\INR) = \left(G=(V, E, \omega),\right.$ $\left.A, T, I, t, k\right)$ problem is $\NP$-hard.
\end{restatable}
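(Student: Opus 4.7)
The plan is to reduce from \textsc{Set Cover}, which is both $\NP$-hard and $W[2]$-hard when parameterized by the solution size. Given an instance $(U,\mathcal{S},k)$, I will build a WSN in which every element $e \in U$ is represented by a target $v_e \in T$ and every set $S \in \mathcal{S}$ by an intermediary $u_S \in I$, with a local gadget arranged so that one attacker edge into $u_S$ pushes $g(v_e)$ below the threshold for \emph{every} $e \in S$ and for no other element. Under this mapping, an attacker strategy using $k$ edges solves the DNR instance if and only if there is a set cover of size $k$, giving both $\NP$-hardness and, since the budget is preserved, $W[2]$-hardness.

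For the construction, for every pair $(e,S)$ with $e \in S$ introduce a fresh ``connector'' node $q_{e,S}$ whose outgoing edges are $(q_{e,S}, v_e)$ and $(q_{e,S}, u_S)$, both with weight $+1$, padded with enough dummy successors so that every connector has a common outdegree $d$. Each $u_S$ receives one auxiliary predecessor rating it $+1$, whose only role is to stabilise the initial goodness at $1$. Each target $v_e$ takes $\{q_{e,S} : S \ni e\}$ as its predecessors together with dummy raters chosen so that $\indeg(v_e)$ is the same for every element. Put $A=\{a\}$, restrict $a$ to rate only nodes in $I$, and take the DNR budget equal to the Set Cover budget $k$.

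For correctness I analyse the joint fixed point of~(\ref{def:goodness})--(\ref{def:fairness}). In the unattacked graph every $f$ and $g$ value equals $1$ because every edge weight agrees with the goodness of its head. An attacker edge $(a,u_S)$ with weight $-1$ drives $g(u_S)$ to a fixed negative constant, which by~(\ref{def:fairness}) decreases $f(q_{e,S})$ by a uniform amount $\delta>0$ for every $e \in S$ and, at first order, leaves $f(q_{e',S'})$ unaffected for $S'\neq S$; propagating through~(\ref{def:goodness}), $g(v_e)$ drops by exactly $\delta/\indeg(v_e)$ when $e \in S$ and does not change when $e \notin S$. Setting the threshold slightly above $1 - \delta/\indeg(v_e)$ then ensures that one attack on $u_S$ crosses the threshold at every $v_e$ with $e \in S$, so $k$ attacker moves suffice exactly when $k$ sets cover $U$. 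The INR variant follows from the same construction after flipping all edge weights.

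The main obstacle is controlling the second-order cascade of the fixed point: a drop in $f(q_{e,S})$ slightly perturbs $g(u_{S'})$ for other sets $S'\ni e$, which again perturbs the fairness of further connectors, and so on. I plan to make the number of dummy successors per connector large enough that the induced contraction factor, bounded by a ratio of degrees, is well below $1$, so the geometric series of higher-order corrections sums to a perturbation strictly smaller than the slack between the first-order drop and the threshold. Verifying this reduces to elementary manipulation of the matrix-norm bounds already recalled in the Preliminaries, after which the equivalence of the two instances is immediate.
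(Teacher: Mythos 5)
Your overall plan coincides with the paper's: reduce from Set Cover, give each set an intermediary node that the attacker is allowed to rate, and let the induced drop in that intermediary's goodness lower the fairness of connector nodes that also rate the element-targets, so that $k$ attacking edges succeed exactly when $k$ sets cover the universe (the paper uses one connector $n_i$ per set pointing at $m_i$ and at all elements of $S_i$, plus one attacker per set, rather than your per-pair connectors, padded out-degrees and single attacker; these differences are inessential).

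The gap is in the quantitative core, and one of your stated steps is false in your own construction. Because every connector $q_{e,S}$ with $e\in S$ rates $u_S$ with $+1$, the node $u_S$ has $|S|+1$ positive raters, so one attacker edge of weight $-1$ does \emph{not} drive $g(u_S)$ to a fixed negative constant: at the new fixed point $g(u_S)$ stays positive (roughly $\frac{|S|}{|S|+2}$), and the size of its drop depends on $|S|$ and on the attacker's own fairness, which itself degrades in a way that depends on how many and which intermediaries the attacker rates. Hence your $\delta$ is neither uniform across sets nor ``exact,'' and a threshold ``slightly above $1-\delta/\indeg(v_e)$'' is not well defined. What the reduction actually requires--and what you only promise in the final paragraph--is a strategy-independent separation: a guaranteed minimum drop at every target covered by an attacked set, and a guaranteed maximum total perturbation at every uncovered target, summed over all cascade paths and all orders (including the feedback loop through your dummy successors, whose goodness tracks the connector's fairness and feeds back into it), with the threshold placed strictly between the two. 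The paper does exactly this work: it attaches $l=8d_{max}^3-d_{max}+1$ stabilising raters to each target, proves a helper bound (its Theorem~\ref{thm:weakinfluence}) showing that a target rated by $k$ perturbed and $l$ stabilising nodes loses at most $2\frac{k}{k+l}\Delta$ goodness, and chooses $\epsilon=\frac{1}{4d_{max}(d_{max}+l)}$ so that second-order cascades provably cannot reach the threshold. Until you carry out the analogous computation (equalise $\indeg(u_S)$, lower-bound the attack's effect despite the attacker's fairness loss, fix $d$ explicitly, and sum the cascade series), the ``only if'' direction of your equivalence, and hence the $\NP$-hardness claim, is unproven.
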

Proof of the above theorems can be found in the appendix of the paper.
\paragraph{Parametrized complexity} The following results, in terms of the $W$-hierarchy for the parameterized algorithms~\cite{parameterized}, hold:
\begin{restatable}{thm}{wparam}
\label{thm:w2}
$\DNR(\INR)$ parameterized by $k$ is  $W[2]$-hard.
\end{restatable}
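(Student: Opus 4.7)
The plan is to give a parameterized reduction from $k$-SET COVER, which is $W[2]$-complete when parameterized by the size of the cover, to $\DNR$ parameterized by the budget $k$. Inspecting the construction behind Theorem~\ref{thm:dnr}, one expects it to already be of the right shape: the natural encoding of SET COVER into $\DNR$ maps sets to the intermediary set $I$, elements to the target set $T$, and the size of the cover to the number of required moves. What remains is to verify that this encoding carries the parameter over unchanged, so that the hardness reduction is an FPT reduction and not merely a polynomial-time reduction.

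More concretely, from a SET COVER instance $(U=\{u_1,\ldots,u_n\}, \mathcal{S}=\{S_1,\ldots,S_m\}, k)$ I would build a WSN as follows. For each $u_i \in U$ I would create a target node $T_i \in T$; for each $S_j \in \mathcal{S}$ an intermediary node $I_j \in I$; and introduce a single attacker $a \in A$. For every incidence $u_i \in S_j$ I would insert a small gadget, along the lines of the indirect attack depicted in Figure~\ref{fig:FGAexp}(c), connecting $I_j$ to $T_i$ through an auxiliary rater so that lowering the fairness of that rater---which is what happens when $a$ rates $I_j$ with weight $-1$---propagates through \eqref{def:goodness}--\eqref{def:fairness} and pushes the goodness of $T_i$ down to at most $t$. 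Auxiliary high-fairness neighbours padding each target would be tuned so that initially all goodness values strictly exceed $t$, and so that a single rating action on $I_j$ affects precisely the targets $T_i$ with $u_i \in S_j$.

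The correspondence is then transparent: $\{S_{j_1},\ldots,S_{j_k}\}$ is a set cover of $U$ if and only if $a$ can drive the goodness of every $T_i$ to at most $t$ by performing the $k$ edge additions that rate $I_{j_1},\ldots,I_{j_k}$ with weight $-1$. Since the budget of the produced $\DNR$ instance equals the parameter $k$ of the SET COVER instance verbatim, and since the reduction runs in polynomial time, this is a parameterized reduction, giving $W[2]$-hardness. Flipping the signs of the weights in the gadget yields the analogous result for $\INR$.

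The main obstacle is the calibration of the gadget. The $\FGA$ values are global: an attack on one intermediary in principle perturbs fairness and goodness throughout the network. One must design the gadget so that (i) unattacked targets stay strictly above the threshold, (ii) every attacked intermediary pulls all of its target-side neighbours below the threshold in a single move, and (iii) no single move can simultaneously lower the goodness of targets associated with two different sets---otherwise the attacker could ``cover'' more elements per move than a cover of size $k$ would justify, breaking the correspondence. Ensuring (iii) by making the attacker-to-intermediary incidence bijective and by padding the target side with enough fair neighbours so that the marginal contribution of any rater outside the intended gadget is too small to cross the threshold is the delicate part of the argument; the rest reduces to plugging numbers into \eqref{def:goodness}--\eqref{def:fairness}.
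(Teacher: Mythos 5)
Your proposal is essentially the paper's argument: the Set Cover reduction already used for Theorem~\ref{thm:dnr} has budget equal to the number of sets in the cover, so it is parameter-preserving and the $W[2]$-hardness of Set Cover (parameterized by cover size) transfers directly to $\DNR$ ($\INR$) parameterized by $k$. The gadget-calibration issues you flag --- keeping unattacked targets strictly above the threshold and preventing one move from spilling over to targets of other sets --- are exactly what that reduction's stabilising nodes, its choice of $\epsilon$, and Theorem~\ref{thm:weakinfluence} already handle, so you can simply invoke that construction instead of rebuilding and re-verifying your own variant.
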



\begin{restatable}{thm}{wparamdmt}
\label{thm:w2dmt}
$DMT(IMT)$ parameterized by $k$ is $W[2]$-hard.
\end{restatable}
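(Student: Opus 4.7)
The plan is to give a parameterized reduction from the $W[2]$-hard problem $\DNR$ (Theorem~\ref{thm:w2}) to $\DMT$, preserving the parameter $k$ exactly. From a $\DNR$ instance $(G=(V,E,\omega), A, T, I, t, k)$ I build a $\DMT$ instance in two steps. First, if any target $v\in T$ has outgoing edges, I apply a \emph{shadow preprocessing}: for each such $v$ I introduce a fresh node $v^{\circ}$ and redirect every outgoing edge $(v,w)$ to $(v^{\circ},w)$ with the same weight, leaving $v$'s incoming neighborhood untouched. Since $v^{\circ}$ has no predecessors, $g(v^{\circ})=1$ contributes nothing new; and since at every original node the multiset of (predecessor, weight) pairs is preserved ($v^{\circ}$ simply taking the place of $v$ in successors' predecessor lists), an induction on the convergent $\FGA$ iteration shows that every goodness value at every original node is identical in the preprocessed graph. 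After this step every target $v\in T$ satisfies $outdeg(v)=0$, so $f(v)=1$ by definition, and because $v\notin A$ no attacker move can ever add or re-weight an outgoing edge of $v$.

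Second, I adjoin, for each $v\in T$, a fresh isolated companion node $\hat{u}_v$ and set $TP=\{\{\hat{u}_v,v\}:v\in T\}$, keeping $A$, $I$, $t$, and the budget $k$ as in the $\DNR$ instance. Since $\hat{u}_v\notin I$ and has no incident edges initially, no attacker action can touch it; the BASELINE FOR FAIRNESS and BASELINE FOR GOODNESS axioms pin $f(\hat{u}_v)=g(\hat{u}_v)=1$ throughout any attack. The two candidate predicted weights of $\{\hat{u}_v,v\}$ therefore collapse to $g(v)$ and $f(v)$; and since $f(v)=1>t$ at every moment, winning on this pair is equivalent to driving $g(v)\le t$, i.e., the $\DNR$ objective on $v$.

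Combining the two steps, a budget-$k$ $\DMT$ attack on the constructed instance exists iff a budget-$k$ $\DNR$ attack exists on the preprocessed instance, iff one exists on the original $\DNR$ instance. The reduction is polynomial and keeps $k$ unchanged, so it is a parameterized reduction yielding $W[2]$-hardness for $\DMT$; the analogous argument starting from the $\INR$ half of Theorem~\ref{thm:w2} handles $\IMT$. The main obstacle is justifying the shadow preprocessing, i.e., verifying that substituting $v^{\circ}$ for $v$ in every successor's predecessor list leaves the unique fixed point of the $\FGA$ recursion unchanged on the goodness coordinates of the original nodes. This reduces to the observation that the $\FGA$ equations at each node depend only on the unordered list of (fairness, weight) entries at its in-neighborhood and (weight, goodness) entries at its out-neighborhood, and the shadow substitution preserves both lists at every node of $V$.
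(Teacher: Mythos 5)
Your reduction is correct, and it shares the paper's key gadget: pair each target $v$ with a fresh, disconnected companion vertex so that the two candidate predicted weights of the pair collapse to $g(v)$ and $f(v)$, making the $\DMT$ objective on the pair equivalent to the $\DNR$ objective on $v$. Where you differ is the source problem and the extra machinery this forces. The paper does not reduce from general $\DNR$; it re-instantiates its own SET COVER construction from Theorem~\ref{thm:dnr}, in which every target already has $outdeg=0$ and hence fairness $1$, so the companion trick alone suffices and no further argument is needed. You instead give a black-box parameterized reduction from arbitrary $\DNR$ instances (invoking Theorem~\ref{thm:w2}), which requires your shadow-node preprocessing to force $outdeg(v)=0$ at the targets; your justification of that step---the $\FGA$ fixed point is unique and the substitution of $v^{\circ}$ for $v$ preserves the (fairness, weight) multiset at every in-neighborhood, for the initial graph and for every post-attack graph, since attackers only modify their own outgoing edges---is sound. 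Your route buys modularity (any hardness of $\DNR$ transfers automatically, with no need to re-examine the SET COVER gadget), at the cost of an extra lemma; the paper's route is shorter because the specific gadget makes the fairness of targets trivially $1$. One caveat, which applies equally to the paper's one-line treatment of the increase variant: for $\IMT$ the companion's second predicted weight $f(v)\times g(\hat{u}_v)=f(v)\in[0,1]$ already sits above any threshold $t<1$, so ``increase either predicted weight above $t$'' is trivially satisfied and the pair no longer encodes the $\INR$ objective; making the ``analogous argument'' for $\IMT$ precise requires either a different companion gadget or a sharper reading of the problem statement, and neither your proposal nor the paper's proof addresses this.
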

Proof of the above theorems can be found in the appendix of the paper.

\begin{figure*}[t]
    \centering
    \text{Bitcoin OTC} \\
    \includegraphics[width=0.33\textwidth]{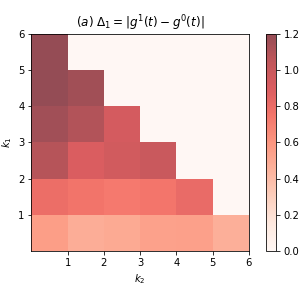}
    \includegraphics[width=0.33\textwidth]{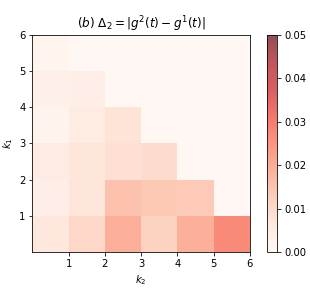}
    \includegraphics[width=0.33\textwidth]{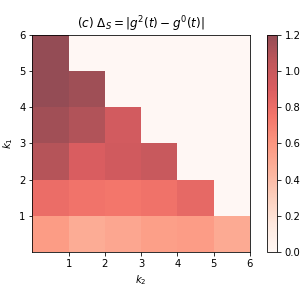} \\
    
    \caption{The results for the mixed settings in Bitcoin OTC. The average strength of an indirect attack is small and significantly smaller that the average strength of a direct attack. $\Delta_1$ shows the influence of the attack with $k_1$ direct edges, $\Delta_2$ shows the influence of the attack with $k_2$ indirect edges, $\Delta_s$ shows the influence of the attack with $k_1$ direct edges and $k_2$ indirect edges. }
    \label{fig:deltaotc}
\end{figure*}


\section{Manipulating a node directly}
\label{sec:directattack}
Let us now focus on attacks on attacking a single node.
First, we report a result on the scale of manipulability of the $\FGA$ function by a direct attack. In particular, the following theorem says that it does not take many edges to change the sign of a single node in the $\DNR$ problem when the attacker is able to rate the target directly.

\begin{restatable}{thm}{direct}
\label{thm:direct}
Let us consider an instance of the $\DNR$ problem, where, for an arbitrary $G = (V, E, \omega)$, a single node $u_T$ is attacked with $0 < g(u_T) \leq 1$ (thus $T = \{u_T\}$), and the set of attacking nodes $A \subseteq V$ is relatively trusted,  $f(v) \geq \frac{1}{2}$ for $v \in A$ and $|A| > \lceil 2 \times g(u_T) \times indeg(u_T) \rceil$. Then, it is trivial to change the sign of the goodness value of $u_T$ (i.e. to achieve the threshold $t=0$) if the attackers can attack directly (i.e. $T \subseteq I$ in the $\DNR$ problem).
\end{restatable}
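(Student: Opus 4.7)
The strategy is for every attacker $a\in A$ to add one edge $(a, u_T)$ with the smallest possible weight $\omega(a,u_T)=-1$, spending $|A|$ moves in total; this is admissible because $T\subseteq I$. Let $f'$ and $g'$ denote the $\FGA$ values on the modified graph; the goal is to show $g'(u_T)\le 0$. First I would expand the goodness recursion to obtain
$$
g'(u_T) \;=\; \frac{\sum_{u \in \pred(u_T)} f'(u)\,\omega(u,u_T)\;-\;\sum_{a \in A} f'(a)}{indeg(u_T)+|A|},
$$
where $\pred(u_T)$ denotes the pre-attack predecessor set. It then suffices to show the numerator is non-positive.

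Next I would argue that $\sum_{a\in A} f'(a)\ge |A|/2$. Since $f(a)\ge 1/2$ holds originally and the freshly added edge $(a,u_T)$ contributes an error term of $|-1-g'(u_T)|/2\le 1/2$ spread over the new out-degree, a short calculation from the fairness recursion shows the lower bound $f'(a)\ge 1/2$ is preserved after the attack, yielding $\sum_{a\in A} f'(a) \ge |A|/2$.

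The heart of the argument is the inequality
$$
\sum_{u \in \pred(u_T)} f'(u)\,\omega(u,u_T)\;\le\; g(u_T)\cdot indeg(u_T). \qquad (\star)
$$
Intuitively, the added edges only push goodness values downward, so in particular $g'(u_T)\le g(u_T)$; a sign split on $\omega(u,u_T)$ then yields $f'(u)\,\omega(u,u_T)\le f(u)\,\omega(u,u_T)$ per retained predecessor. When $\omega(u,u_T)\ge 0$ the distance $|\omega(u,u_T)-g(u_T)|$ grows, so $f'(u)\le f(u)$; when $\omega(u,u_T)<0$ the distance shrinks, so $f'(u)\ge f(u)$, but multiplying by the negative weight flips the direction of the inequality. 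Summing and using the original identity $\sum_u f(u)\omega(u,u_T)=g(u_T)\cdot indeg(u_T)$ produces $(\star)$.

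Combining the two bounds gives
$$
g'(u_T)\;\le\;\frac{g(u_T)\cdot indeg(u_T)\;-\;|A|/2}{indeg(u_T)+|A|},
$$
which is $\le 0$ precisely when $|A|\ge 2\,g(u_T)\,indeg(u_T)$; the hypothesis $|A|>\lceil 2g(u_T)\,indeg(u_T)\rceil$ secures this with strict slack. The delicate step is $(\star)$: the post-attack fairness $f'(u)$ is determined by $g'$ at \emph{all} successors of $u$ through the fixed-point recursion, not only at $u_T$, so ripple effects through the rest of the graph must be controlled. I would make the claim rigorous via Kumar et al.'s iterative computation: starting from $f^{(0)}\equiv g^{(0)}\equiv 1$ on the original and modified graphs simultaneously, one proves by induction on $t$ that $g^{(t)}_{\text{new}}(u_T)\le g^{(t)}_{\text{old}}(u_T)$ and that the sign-aware comparisons between $f'^{(t)}(u)\,\omega(u,u_T)$ and $f^{(t)}(u)\,\omega(u,u_T)$ persist at every iteration, and then passes to the limit using the $\tfrac{1}{2^t}$ convergence bound stated in the Preliminaries.
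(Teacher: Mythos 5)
Your proposal follows essentially the same route as the paper's proof: the same strategy of having the attackers each add an edge $(a,u_T)$ with weight $-1$, the same lower bound $f'(a)\ge\frac{1}{2}$ on the attackers' post-attack fairness, the same sign-split comparison showing $\sum_{v\in \pred(u_T)} f'(v)\,\omega(v,u_T)\le \sum_{v\in \pred(u_T)} f(v)\,\omega(v,u_T)=g(u_T)\cdot \indeg(u_T)$, and the same final threshold arithmetic $k>2\,g(u_T)\,\indeg(u_T)$. Your extra sketch of an induction over the iterative computation to control ripple effects is only a refinement of the monotonicity observation the paper asserts directly, not a different argument.
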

The proof can be found in the appendix of the paper.

\section{Bounding the strength of indirect attacks}
\label{sec:indirectattack}
In this section, we give bounds on the strength of an indirect \textit{Sybil attacks}, i.e., the attack in which the attacker creates a new node when adding a new edge.\footnote{Our analysis also provides some additional bound for a direct Sybil attack.}
Our results hold for a family of relatively dense networks, $G=(V,E,\omega)$, in which every node has a lower bound on its indegree and outdegree, i.e., $\forall_{v \in V}{indeg(v) \geq k \And outdeg(v) \geq k}$, and the intermediary nodes, $j \in V$, are relatively weakly rated, i.e. $\sum_{v \in \pred(j)} |\omega_{vj}| \leq k$. We call such networks \textit{minimum-$k$-neighbour networks}. 

\begin{restatable}[Indirect Sybil attack]{thm}{indirect}
\label{indirect}
Assume a WSN $G=(V,E,\omega)$, where a new Sybil node $s_i$ is added that rates some intermediary node $i \neq t$. Whenever $\forall_{v \in V} indeg(v) \geq k \And outdeg(v) \geq k$, and $\forall_{j \in V} \sum_{v \in \pred(j)} |\omega_{vj}| \leq k$, then $|\Delta g(t) | \leq \frac{2}{(indeg(i)+1) \times k}$.
\end{restatable}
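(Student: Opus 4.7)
My plan is to transform the question about $|\Delta g(t)|$ into a perturbation analysis of the $\FGA$ fixed-point equations, viewed as a contractive linear recursion with a small inhomogeneous term located at the attacked intermediary $i$; the desired bound then falls out of the corresponding Neumann series. Write $f, g$ for the $\FGA$ values in $G$ and $f', g'$ for those in the perturbed graph $G' = G \cup \{(s_i, i)\}$, and set $\delta g_v := |g'_v - g_v|$ and $\delta f_u := |f'_u - f_u|$ for $v,u \in V$. Both iterative schemes start from $f^{(0)}=g^{(0)}=1$, so $\delta g^{(0)} = \delta f^{(0)} = 0$.

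Step 1 (recursive inequalities and linear form). Applying $\bigl||a|-|b|\bigr| \leq |a-b|$ to the fairness update gives $\delta f^{(n+1)}_u \leq \frac{1}{2\,outdeg(u)} \sum_{v \in \succe(u)} \delta g^{(n)}_v$ for $u \neq s_i$. For goodness, every $v \neq i$ has an unchanged in-neighbourhood, so $\delta g^{(n+1)}_v \leq \frac{1}{indeg(v)}\sum_{u \in \pred(v)} |\omega_{uv}|\,\delta f^{(n)}_u$. Only $i$ has a changed in-neighbourhood; splitting the perturbed sum into the old predecessors plus the new edge and recombining the fractions $\frac{1}{indeg(i)+1}$ and $\frac{1}{indeg(i)}$ yields
\begin{equation*}
\delta g^{(n+1)}_i \,\leq\, \tfrac{1}{indeg(i)+1}\!\!\sum_{u \in \pred(i)}\!\!|\omega_{ui}|\,\delta f^{(n)}_u \,+\, \tfrac{2}{indeg(i)+1},
\end{equation*}
where the structural constant absorbs $|f'^{(n)}_{s_i}\omega_{s_i,i} - g^{(n+1)}_i|$ using $|f|,|g|,|\omega|\leq 1$. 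Collecting these coordinate-wise bounds, I define the nonnegative matrices $A_g$ (entries $|\omega_{uv}|/indeg(v)$ on $\pred(v)$, with denominator $indeg(i)+1$ in row $i$) and $A_f$ (entries $1/(2\,outdeg(u))$ on $\succe(u)$), plus the structural vector $c := \frac{2}{indeg(i)+1}\,e_i$. Then $\delta g^{(n+1)} \leq A_g \delta f^{(n)} + c$ and $\delta f^{(n+1)} \leq A_f \delta g^{(n)}$, so $\delta g^{(n+2)} \leq A\,\delta g^{(n)} + c$ with $A := A_g A_f$. Inducting from $\delta g^{(0)}=0$ and passing to the limit gives $\delta g^{\infty} \leq (I-A)^{-1} c = \sum_{j\geq 0} A^j c$, provided $\|A\|_\infty < 1$.

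Step 2 (norm bounds and summation). The two hypotheses contribute the two factors in the target bound. The minimum-$k$-neighbour condition $outdeg(u)\geq k$ makes every row of $A_f$ sum to $\tfrac12$, so $\|A_f\|_\infty \leq \tfrac12$. The intermediary condition $\sum_{u\in\pred(v)}|\omega_{uv}|\leq k$ together with $indeg(v)\geq k$ gives $\|A_g\|_\infty \leq 1$, hence $\|A\|_\infty \leq \tfrac12$ and the Neumann series converges. The sharpened estimate that injects the extra factor of $1/k$ is $\|A e_i\|_\infty \leq \frac{1}{2k}$: the vector $A_f e_i$ is supported on $\pred(i)$ with entries $\leq 1/(2k)$, and the $t$-th row of $A_g$ picks up $\tfrac{1}{indeg(t)}\sum_{u \in \pred(t)\cap\pred(i)} |\omega_{ut}|\cdot\tfrac{1}{2\,outdeg(u)}$, in which $outdeg(u)\geq k$ contributes one factor of $k$ in the denominator and the intermediary bound on $t$ converts the weight sum into another factor of $k$ that cancels $indeg(t)$. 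Consequently $\|Ac\|_\infty \leq \frac{2}{indeg(i)+1}\cdot\frac{1}{2k}$. For $t \neq i$ we have $[A^0 c]_t = 0$, while submultiplicativity gives $|[A^j c]_t| \leq (1/2)^{j-1}\cdot\frac{1}{(indeg(i)+1)k}$ for $j\geq 1$; summing the geometric series yields $\delta g^{\infty}_t \leq \frac{2}{(indeg(i)+1)k}$, which is exactly the desired bound.

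The main obstacle I anticipate is precisely the sharpened norm bound $\|A e_i\|_\infty \leq \frac{1}{2k}$ in Step~2: it is the only place where both hypotheses are used simultaneously, and it is where the indirect-attack factor $1/k$, which distinguishes this theorem from the $O(1/(indeg(i)+1))$ direct-attack scaling of Theorem~\ref{thm:direct}, is actually produced. Everything else is routine bookkeeping once the correct matrix form of the perturbation and the structural vector $c$ have been identified.
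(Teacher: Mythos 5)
Your proposal is correct, and its skeleton coincides with the paper's: both linearize the perturbation into a componentwise inequality $Q \le MQ + c$ with source term $c=\tfrac{2}{indeg(i)+1}e_i$ at the attacked intermediary (your $\tfrac{2}{indeg(i)+1}$ absorption of the new edge and the $\tfrac{1}{indeg(i)}-\tfrac{1}{indeg(i)+1}$ discrepancy is exactly the paper's Equation~\ref{assessment}), and both sum a Neumann series using $\|M\|_\infty\le\tfrac12$. Where you genuinely diverge is in how the extra factor $1/k$ is extracted. The paper works with two norms: it uses the column-sum bound $\|M\|_1\le\tfrac12$ (this is where the hypothesis $\sum_{v\in\pred(j)}|\omega_{vj}|\le k$ together with $indeg\ge k$ enters) to control the aggregate $\sum_{l}|\Delta g(l)|\le\tfrac{4}{indeg(i)+1}$, and then applies a separate first-hop inequality specific to the target, Equation~\ref{deltag}, $|\Delta g(t)|\le\tfrac{1}{2k}\sum_v|\Delta g(v)|$ (using $indeg(t)\ge k$), to land on $\tfrac{2}{(indeg(i)+1)k}$. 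You instead stay entirely in the $\infty$-norm and sharpen the \emph{first} application of $A$ to the source vector, $\|Ae_i\|_\infty\le\tfrac{1}{2k}$, then sum $\sum_{j\ge1}\|A\|_\infty^{j-1}\|Ac\|_\infty$; note that for this bound the simple estimate $(Ae_i)_l\le\tfrac{1}{indeg(l)}\cdot|\pred(l)|\cdot\tfrac{1}{2k}$ using only $|\omega|\le1$ and $outdeg(u)\ge k$ already suffices, so the invocation of the weight-sum condition in your narration is one of two interchangeable ways to finish, and in fact your route needs neither $\sum_{v\in\pred(j)}|\omega_{vj}|\le k$ nor $indeg\ge k$ (also, $\|A_f\|_\infty=\tfrac12$ is structural and does not use $outdeg\ge k$). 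What each approach buys: your single-norm argument is tighter in hypotheses and cleaner on the analytic side, since the monotone induction $\delta g^{(2n)}\le\sum_{j<n}A^jc$ with passage to the limit replaces the paper's less rigorous step that ``the maximum is achieved at equality $Q=MQ+c$'' and its unargued invertibility claim; the paper's two-norm argument, on the other hand, yields the intermediate aggregate bound $\sum_l|\Delta g(l)|\le\tfrac{4}{indeg(i)+1}$, which is of independent interest and is reused for Theorem~\ref{thm:direct2}.
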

\begin{proof}
Let us define set $V'$ as follows. We begin with $V' = \{t\}$. Next, we iteratively add to $V'$ other nodes $v \in V$ which are indirectly connected to at least one node in $V'$ (i.e., $\exists{v' \in V'}: \exists{(l,v'), (l,v) \in E}$).
It is easy to see that the intermediary node $i$ has to belong to $V'$ in order to make the indirect attack successful.

We denote by $g(l)/f(l)$ the goodness/fairness value of the node $l$ before the Sybil attack, and by $g'(l)/f'(l)$ the goodness/fairness value after the attack. Let $\Delta g(l) = g'(l) - g(l)$ and $\Delta f(l) = f'(l) - f(l)$.

For the target node, $t \in V'$, we can calculate how its $g(t)$ changes w.r.t. the changes introduced to the goodness of all other nodes. Here, we assume that the Sybil attack is indirect, i.e., the Sybil edge is not added to $t$.
\begin{eqnarray}
g(t) = \frac{1}{indeg(t)} \sum_{u \in \pred(t)}  f(u)\times \omega(u,t)=\frac{1}{indeg(t)} \times \nonumber \\
 \sum_{u \in \pred(t)}  \hspace{-0.1cm} \omega(u,t) \times \Big[ 1-\frac{1}{outdeg(u)}\sum_{v \in \succe(u)} \frac{|\omega(u,v)-g(v)|}{2} \Big] \nonumber
\end{eqnarray}

Thus, from the triangle inequality:
\begin{eqnarray}
|\Delta g(t)| \leq \frac{1}{2 \times indeg(t)} \sum_{u \in \pred(t)} \sum_{v \in \succe(u)} \nonumber \\ \frac{1}{outdeg(u)} \times |\omega(u,t)| \times | \Delta g(v) | \leq \nonumber \\
\frac{1}{2 \times indeg(t)} \sum_{v: \exists{(t,u) \in E \And (u,v) \in E}} \sum_{u \in \succe(v)} \frac{|\Delta g(v)|}{outdeg(v)} \nonumber
\end{eqnarray}
And because $indeg(t) \geq k$, then: 
\begin{eqnarray}
\label{deltag}
|\Delta g(t) | \leq 
\frac{\sum_{v: \exists{(t,u) \in E \And (u,v) \in E}} |\Delta g(v)|}{2 \times k}.
\end{eqnarray}

Let us calculate $\Delta g(l)$ for all $l \in V'$. We can see that whenever we introduce a new node, $s_i$, that aims at node $i$ in the network, then:
\begin{eqnarray}
\label{assessment}
|\Delta g(i)| = \Big| \frac{ -1 + \sum_{u \in \pred(i)} f'(u)\times \omega(u,i) }{indeg(i)+1} - \nonumber \\ \frac{ \sum_{u \in \pred(i)} f(u)\times \omega(u,i) }{indeg(i)}  \Big| \leq \nonumber \\ \Big|
\frac{\sum_{u \in \pred(i)} \Delta f(u)\times \omega(u,i) }{indeg(i)} - \frac{1}{indeg(i+1)} - \nonumber \\  \sum_{u \in \pred(i)} \frac{f'(u) \times \omega(u,i)}{indeg(i)(indeg(i)+1)} \Big| \leq  \nonumber\\
\Big| \frac{ \sum_{u \in \pred(i)} \Delta f(u)\times \omega(u,i) }{indeg(i)} \Big| + \frac{2}{indeg(i)+1} \nonumber
\end{eqnarray}
For the other nodes, $l \in V'$, that are not targeted by $s_i$:
\begin{eqnarray}
|\Delta g(l)| = \Big| \frac{\sum_{u \in \pred(l)} \Delta f(u)\times \omega(u,l) }{indeg(l)}  \Big| \nonumber
\end{eqnarray}

For all $l \in V'$, we can write:
\begin{eqnarray}
\label{coefficients}
\Big|\sum_{u \in \pred(l)}\frac{ \Delta f(u)\times \omega(u,l) }{indeg(l)}  \Big| \leq \nonumber \\
\frac{1}{2 \times indeg(l)} \sum_{v \in \pred(l), u \in \succe(v) \setminus \{l\}} \frac{|\omega_{vl}|\times|\Delta g(v)|}{outdeg(v)} = \nonumber \\
\frac{1}{2 \times indeg(l)} \sum_{i \in V} \sum_{(v,l), (v,i) \in E} \frac{|\omega_{vi}|}{outdeg(v)}|\Delta g(i)| \nonumber
\end{eqnarray}
In the matrix form, we thus have:
\begin{eqnarray}
Q \leq M \times Q + \begin{bmatrix}
0 &
\frac{2}{indeg(i)+1} &
0&
0 &
0
\end{bmatrix}^T, \nonumber
\end{eqnarray}
where $Q$ is a vector of length $|V'|$ which on the $l$'th position has $\Delta g(l)$ for $l \in V'$. And $M$ is a matrix of size $|V'|\times|V'|$, and its coefficients are filled according to Equation~\ref{coefficients}.
Note that:
\begin{eqnarray}
\label{inftymetric}
\frac{1}{2 \times indeg(l)} \sum_{v \in \pred(l), u \in \succe(v) \setminus \{l\}} \frac{|\omega_{vl}|}{outdeg(v)} \leq \frac{1}{2}
\end{eqnarray}
This implies that $||M||_{\infty} \leq \frac{1}{2}$.
On the other hand, for a given column $j$ in the matrix $M$:
\begin{eqnarray}
\sum_{l\in V} \frac{1}{2 \times indeg(l)} \sum_{(v,l), (v,j) \in E} \frac{|\omega_{vj}|}{outdeg(v)} \leq \nonumber \\
\frac{1}{2k}\sum_{l\in V} \sum_{(v,l), (v,j) \in E} \frac{|\omega_{vj}|}{outdeg(v)} \leq 
\frac{1}{2k} \sum_{v \in \pred(j)} |\omega_{vj}| \leq \frac{1}{2} \nonumber
\end{eqnarray}
whenever $\sum_{v \in \pred(j)} |\omega_{vj}| \leq k$, which implies that $||M||_1 \leq \frac{1}{2}$.

The values of $\Delta g(l)$ achieve maximum when:
\begin{eqnarray}
Q = M \times Q + \begin{bmatrix}
0 &
\frac{2}{indeg(i)+1} &
0&
0 &
0
\end{bmatrix}^T \nonumber
\end{eqnarray}
But in this case, we can solve the equation system with:
\begin{eqnarray}
Q = \frac{1}{I-M} \times \begin{bmatrix}
0 &
\frac{2}{indeg(i)+1} &
0&
0 &
0
\end{bmatrix}^T \nonumber
\end{eqnarray}
Matrix $M$ is indeed invertible due to appropriately selected nodes $l \in V'$. What is more, since $||M||_{\infty} \leq \frac{1}{2}$, then we can write $\frac{1}{I-M} = I + M + M^2 + \ldots$ \cite{cambridge}. Finally, the above quality and $||M||_{1} \leq \frac{1}{2}$ imply that $|\sum_{l \in V} \Delta g(l)| \leq \frac{4}{indeg(i)+1}$.

Now, because Equation~\ref{deltag} holds for the target node, $t$, and $|\sum_{l \in V} \Delta g(l)| \leq \frac{4}{indeg(i)+1}$, then:
\begin{eqnarray}
|\Delta g(t) | \leq \frac{2}{(indeg(i)+1) \times k}. \nonumber
\end{eqnarray}
\end{proof}

The above theorem shows that in a minimum-$k$-neighbour network, the indirect attack is at least $k$ times weaker than the direct attack. That is, when we modify the goodness value of some node $i$ by $\Delta$, then the value of the target node $t$ is modified by at most $\frac{\Delta}{k}$.

Building upon the above reasoning, we can show that the following result also holds (the proof in the appendix of the paper).

\begin{restatable}[Direct Sybil attack]{thm}{directtwo}
\label{thm:direct2}
Assuming in a WSN $G=(V,E,\omega)$ where one adds a new Sybil node rating directly some target node $t$, then the goodness value of the target node $t$ decreases by at most $|\Delta g(t)| \leq \frac{2}{indeg(t)}$
\end{restatable}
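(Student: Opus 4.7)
The plan is to mirror the matrix-cascade argument from the proof of Theorem 7, with the perturbation source now sitting at the target node $t$ itself rather than at an intermediary.

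First I would write $g'(t)$ and $g(t)$ explicitly and subtract. Since the Sybil $s$ is the only new predecessor of $t$ and its out-degree is one, algebraic manipulation yields
\begin{equation*}
\Delta g(t) = \frac{\omega(s,t) f'(s) - g(t)}{indeg(t)} + \frac{1}{indeg(t)}\sum_{u \in \pred(t)\setminus\{s\}} \omega(u,t) \Delta f(u),
\end{equation*}
where $indeg(t)$ denotes the post-attack in-degree. The first summand is bounded trivially by $2/indeg(t)$ since both $|\omega(s,t) f'(s)| \leq 1$ and $|g(t)| \leq 1$; this already matches the bound claimed.

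To handle the residual cascade term, I would reuse the matrix framework from Theorem 7. I would define $V'$ as the set of nodes affected by repeated application of the $g$-to-$f$-to-$g$ update starting at $t$, and assemble the system $Q \leq M Q + S$ with $Q_l = |\Delta g(l)|$ for $l \in V'$, with $M$ the same propagation matrix as in Theorem 7 (for which $\|M\|_\infty \leq 1/2$ holds purely for graph-theoretic reasons, so no minimum-$k$-neighbour assumption is needed), and with source vector $S$ whose unique nonzero entry is at the coordinate of $t$ with value the direct bound above. The cascade contribution to $|\Delta g(t)|$ is then the $t$-coordinate of $(I - M)^{-1} S = \sum_{k\geq 0} M^k S$, which converges because $\|M\|_\infty \leq 1/2$.

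The hard part is showing that this cascade does not inflate the constant beyond $2/indeg(t)$: the naive bound $\|(I-M)^{-1}\|_\infty \leq 2$ would yield only $4/indeg(t)$. The sharpening requires separating the Sybil's direct contribution from the cascade so that the cascade feeds back into the $\Delta g(t)$ equation strictly through the $\Delta f(u)$ terms for $u \in \pred(t)\setminus\{s\}$, and then tracking only the $t$-coordinate of $(I - M)^{-1} S$ (which, because $S$ is supported solely at $t$, reduces to the $(t,t)$-entry of the resolvent applied to the source magnitude). This is the transplanted version of the tight analysis in Theorem 7, moved from the intermediary node $i$ to the target $t$.
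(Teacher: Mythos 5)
Your exact decomposition of $\Delta g(t)$ into a Sybil source term and a cascade term is correct (and cleaner than what the paper writes), and you are right that $\|M\|_\infty \le \tfrac12$ holds without any minimum-$k$-neighbour hypothesis; up to that point you follow the same route as the paper, which likewise reuses the assessment inequality and the matrix cascade of Theorem~\ref{indirect} with the perturbation moved to $t$. The gap is the step you yourself label ``the hard part'' and then only gesture at. Bounding the source entry by $\tfrac{2}{indeg(t)}$ (treating $|\omega(s,t)f'(s)|\le 1$ and $|g(t)|\le 1$ independently) and multiplying by the resolvent gives $\tfrac{4}{indeg(t)}$, and the sharpening you propose---restricting attention to the $(t,t)$-entry of $(I-M)^{-1}$ because $S$ is supported at $t$---does not recover the claimed constant: the only information available is $\|M^k\|_\infty\le 2^{-k}$ together with nonnegativity of $M$, so $[(I-M)^{-1}]_{tt}=1+\sum_{k\ge 1}[M^k]_{tt}$ is again only bounded by $2$, and the product is still $\tfrac{4}{indeg(t)}$, which exceeds $\tfrac{2}{indeg(t)}$ whenever $indeg(t)\ge 2$. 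As written, your argument therefore proves a weaker bound than the statement.

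The paper closes the factor of two on the other side of the product: in its (very terse) proof the source term for an attack aimed at $t$ itself is taken to be $\tfrac{1}{indeg(t)}$ rather than $\tfrac{2}{indeg(t)+1}$, after which the same $\|(I-M)^{-1}\|_\infty\le 2$ step yields $\tfrac{2}{indeg(t)}$. To complete your version you would need an analogous tightening of the source term, for instance by exploiting the self-consistency $f'(s)=1-\tfrac{|\omega(s,t)-g'(t)|}{2}$, which prevents the Sybil from simultaneously rating far from the current goodness of $t$ and being maximally fair; estimating $|\omega(s,t)f'(s)|$ and $|g(t)|$ as two independent $1$'s is exactly where the extra factor is lost. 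Without such an argument (or some other replacement for the resolvent bound), the constant in the theorem is not reached.
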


\section{Simulations}
\label{sec:simulations}

We conduct a series of simulations on the Bitcoin OTC, Bitcoin Alpha, and RFA Net datasets studied by Kumar et al.~(\citeyear{kumar2016edge}). They consist of weighted signed networks with $|V| =\geq 3,700$, $|E| \geq 24,000$ each, where the proportion of positively weighted edges is $\geq 84\%$. A vast majority of the nodes in each network, i.e., more than $76\%$, have an indegree up to $10$.
Furthermore, most of the users in the networks are evaluated as fair by the $\FGA$ function---$f(v) \geq 0.7$ for $100\%$ of the nodes (with the mean $f(v)$ equal to $0.94$). As for goodness, only less than $4\%$ of users have a strongly positive score of more than $0.5$, and in the Bitcoin OTC network $8\%$ of users have
negative score of less than $-0.3$, whereas in Bitcoin Alpha $3,8\%$ have goodness below $-0.3$.

We focus on the attacks that lower the \emph{goodness} of the nodes, as in the $\DNR$ problem. In particular, each experiment was conducted on the set of attacking nodes $A$ of size $k = \{1,\ldots,7\}$
and the target set $T=\{t\}$ of size $1$. The target, $t \in T$, was chosen randomly from those nodes that have relatively high goodness ($g(t) \geq 0.50$) and a low indegree ($0 < indeg(t) < 10$). We study two types of the attackers:
\begin{itemize}
    \item 
    \textbf{not-established} attackers --- chosen from relatively newly created nodes with $0 < indeg < 10$ and $outdeg= 0$). This allows for studying Sybil-style attacks; and
    \item \textbf{established} attackers --- chosen from the nodes with $outdeg(v) > 5$ (and iteratively choosing nodes with fairness $f(v) > 0.7$). This allows for studying attacks by the nodes whose standing in the network has been built for some time. 
\end{itemize} 
We simulate three types of attacks:
\begin{itemize}
    \item \textbf{direct attacks} --- a set of attackers $A$ of size $k$ rates directly the target node $t \in T$. The pseudocode is presented in Algorithm~1. Each attacker rates $t$ using weight $-1$; 
    \item \textbf{indirect attacks} ---  the attackers set $A$ of size $k$ rates the neighbors of the neighbors of the target node, to minimize the goodness part of the $\FGA$ of the target node by manipulating \emph{fairness} of the targets' neighbors. The pseudocode of the attack is presented in the Alorithm~2. More precisely, the algorithm implements a greedy approach, where each new edge is used to minimize the \emph{goodness} of the target node $t$ by minimizing (or maximizing) the \emph{fairness} of one of the targets' neighbors by directly rating the successor of the target's neighbor with an edge of weight $1$ or $-1$. The algorithm performs calculations iteratively on the attackers sorted by the value of their fairness value.
    \item \textbf{mixed attack} --- $k_1$ attacking nodes perform a direct attack and $k_2$ perform an indirect one, where $k_1+k_2=k$.
\end{itemize}

\makeatletter
\newcommand{\algorithmstyle}[1]{\renewcommand{\algocf@style}{#1}}
\makeatother

The results in Figure~\ref{fig:dg} are presented with a 95\% confidence interval (marked with the opaque region around the solid/dashed lines). They show how a direct/indirect attack by established/not established nodes influences the goodness of the target node ($\Delta$). For Bitcoin OTC and Bitcoin Alpha, and RFA Net in both cases (direct and indirect attacks), there is no significant difference between the established and not established results (solid and dashed lines).

In Figure~\ref{fig:deltaotc} (see the full version in Figure~\ref{fig:deltaa}), we present results for a mixed setting. The individual cells of the heatmaps show: (a) $\Delta_1$ --- the absolute change of the goodness of the target node introduced by the $k_1$ direct edges; (b) $\Delta_2$ --- the absolute change of the goodness of the target node introduced by the $k_2$ indirect edges; and (c) $\Delta_S$ --- the total change, i.e., $\Delta_S = \Delta_1 + \Delta_2$. The results show that the average strength of a direct attack varies between $0.2$ and $1.2$ for different $k$, and the average strength of the indirect attack is lower than $0.05$, i.e., significantly smaller than the average strength of a direct attack. Theorem~\ref{thm:direct} proved in the appendix of the paper provides some  intuition why the direct attacks have such a strong impact, whereas Theorem~\ref{thm:weakinfluence} gives another intuition why the indirect attack is weaker than the direct attack.

\section{Better heuristic for indirect attacks}
\begin{algorithm}[t]
 \KwData{$A, T = \{t\}, G$}
 sort nodes in $A$ by their fairness score\\
 \While{$i <  len(sorted(A))$}{
    $a \leftarrow sorted(A)[i]$\\
   $N_1 \leftarrow \pred(t)$\\
    find a neighbor $n_2 \in \succe(n_1) \setminus \{t\}$ of a neighbor $n_1 \in N_1$ that minimizes the goodness value of $t$, when adding  an edge $(a, n_2)$ with weight $\omega(a, n_2) = 1$ or $\omega(a, n_2) = -1$\\
    $edges\_len \leftarrow$  $min(SCALE*len(indegree(n_2), MAX, len(A)-i)$ \\
    add $edges\_len$ edges to the graph $G$\\
    $i \leftarrow i + edges\_len$
 }
 \caption{Modified indirect attack}
 \label{alg:indirectmodified}
\end{algorithm}

We conduct additional tests to analyze the strength of the indirect attacks. In Algorithm~\ref{alg:indirectmodified}, instead of adding only a single edge in each iteration, as in Algorithm~\ref{alg:indirect}, we add a series of new edges. In more detail, we add $SCALE=5$ times more Sybil edges than the indegree of the target node (but at most some predefined maximum $MAX=10$). 
We take this approach to scale up the effect of manipulating the goodness value of the target nodes.

\begin{table}[t]
\centering
\begin{tabular}{llll} \hline
 & B. OTC & B. Alpha & RFA Net
\\ \hline
max $indeg$ & $10$ &  $13$ & $10$  \\ 
min $goodness$ & $0.8$ &  $0.5$ & $0.5$ \\
num of samples & $20$ & $30$ &  $27$ \\
num of edges & $20$ & $20$ &  $20$\\
\hline
\end{tabular}
\caption{The parameters used to search weak target nodes in the test sets. ``B.'' stands for ``Bitcoin''.}
\label{tab:params}
\end{table}

We attack only nodes with bounded $indegree$, and the goodness value bigger than some threshold. We believe these nodes are more easily manipulable than an average node. See Table~\ref{tab:params} for the details.

\begin{table}[H]
\centering
\begin{tabular}{llll} \hline
 & B. OTC & B. Alpha & RFA Net \\ \hline
average change & $0.081$ &  $0.085$ & $0.030$  \\ 
standard deviation & $0.089$ &  $0.085$ & $0.028$ \\
min change & $0.010$ &  $0.008$ & $0.009$ \\
max change & $0.300$ &  $0.298$ & $0.131$ \\
median & $0.053$ &  $0.042$ & $0.021$ \\
0.75-quantile & $0.111$ &  $0.121$ & $0.041$ \\
\hline
\end{tabular}
\caption{The results of Algorithm~\ref{alg:indirectmodified} on different datasets.} 
\label{tab:thres}
\end{table}
The analysis of the data in Table~\ref{tab:thres} shows that the attack using the Algorithm~\ref{alg:indirectmodified} may (but rarely does) achieve relatively strong results in some cases. To be more precise, the maximum change of the goodness value of the target node introduced by the indirect attack in the Bitcoin OTC and Bitcoin Alpha detasets reached the barrier of $0.3$. This shows that in general networks (unlike the minimum-$k$-neighbour ones described in Theorem~\ref{indirect}) do not have strong resistance property against indirect attacks.
In most cases however the attack gives rather weak results ($75\%$-quantile on all datasets is at most $0.12$ with low median of at most $0.05$). The minimum strength of the attack in all datasets achieves $0.01$.

\section{Conclusions}
In this paper, we axiomatized the $\FGA$ measure with respect to, among others, the properties of homogeneously and unanimously rated nodes and with respect to the properties of the rating nodes that achieve constant rating error. Furthermore, we presented the hardness results on the manipulability problems. We also derived analytical results concerning the strength of the direct attacks and weakness of the indirect attacks in the networks in which each node has minimum $k$ neighbours (in and out). Finally, we visualised experimentally the strength of direct attacks and analysed two different greedy algorithms for indirect attacks. This showed that $\FGA$ might be manipulated indirectly in non-minimum-$k$-neighbour networks. Overall, a higher-level insight from our analysis is that $\FGA$ is generally more difficult to manipulate compared to other social network analysis tools (e.g., centrality measures). In particular, while worst-case hardness results are common in the literature, various other tools turned out to be easily manipulable in practice by well-crafted heuristics \cite[e.g.]{bergamini2018improving,Waniek2018,waniek2019hide}. The $\FGA$ measure turns out to be more resilient, which provides a good argument for using it in practice.

In future it would be plausible to compare other candidate measures existing in the literature in the terms of manipulability and try to derive a more general approach for analyzing the manipulability of weighted ranking functions. We also encourage studies on the axiomatization of the ranking functions, which would result in a better understanding of their properties.

\nobibliography*
\bibliography{aaai23}



\clearpage
\appendix

{
\large
\noindent \textbf{An appendix to ``Predicting Weights in Signed Weighted Networks is Difficult to Manipulate'':
}
}
\bigskip
\section{Axiomatization}

\subsection{Goodness axiomatization}

Below we formally define the axioms and present the uniqueness proofs.

\noindent We begin with the characterization of goodness part of the $\FGA$ function.  
Let $v \in V$ have all the predecessors $u_i \in \pred(v)$ homogenous and unanimous w.r.t. $v$, i.e. they all have the same fairness $f_0$ and they rate $v$ with the same rating $\omega_0$. Now, let us assume that $f_0$ of all the predecessors gets increased by the same amount, $\Delta$. We require that the goodness of the rated node $v$ should rise proportionally to $\Delta$ (see Figure~\ref{fig:axiom1}).
%
%
%
To formalize this axiom, let us denote the goodness of $v$ in such a setting by $g^{\phi_{\omega}, \phi_{f}}(v)$, where $\phi_{\omega}$  indicates the value of weight of the edges $(u_i, v)$, and $\phi_{f}$ indicates the value of the fairness of all $u_i \in \pred(v)$.

\begin{axiom}[SMOOTH GOODNESS]\label{axiom:increase:fairness:def}
Let $v \in V$, such that  $\forall u_i\hspace{-0.075cm} \in \pred(v) \ f(u_i) = f_0 \land\ \omega(u_i,v) = \omega_0$. 
Then $\forall \Delta \in \mathbb{R}$: 
\[g^{\omega_0, f_0+\Delta}(v) = g^{\omega_0, f_0}(v) + g^{\omega_0, \Delta}(v).\]
\end{axiom}

Next, let us consider an analogous situation, but now the weight $\omega_0$ of the edges from the predecessors $\pred(v)$ to $v$ increases by $\Delta$ while their fairness $f_0$ remains the same (see Figure~\ref{fig:axiom2}). This leads to the following axiom:

\begin{axiom}[INCREASE WEIGHT]\label{axiom:increase:weight:def}
Let $v \in V$, such that  $\forall \hspace{-0.025cm} u_i \hspace{-0.075cm} \in \hspace{-0.075cm} \pred(v)  f(u_i) = f_0 \land \omega(u_i,v) = \omega_0$. Then, $\forall\hspace{-0.025cm} \Delta \hspace{-0.025cm} \in \hspace{-0.025cm} \mathbb{R}$: 
\[g^{\omega_0+\Delta, f_0}(v) = g^{\omega_0, f_0}(v) + g^{\Delta, f_0}(v).\]
\end{axiom}

Next, we require that nodes with higher fairness have a higher impact on the goodness of the rated nodes. Similarly, higher weights should result in a better rating of the target node (see Figure~\ref{fig:axiom3}).

\begin{axiom}[MONOTONICITY FOR GOODNESS] \label{axiom:monotonous:goodness:def}
Let $u_1$ and $u_2$ be two nodes rated by unanimous and homogeneous sets of predecessors $S_1$, $S_2$. $S_i$ consisting of nodes with identical fairness $f_{i}$ who rate $u_i$ with identical $\omega_{i}$.
Then, if $f_1 = f_2$ and $\omega_{1} > \omega_{2}$, then $g(u_1) \geq g(u_2)$. Also, if $\omega_1 = \omega_2$ and $f_{1} > f_{2}$, then $g(u_1) \geq g(u_2)$ as well.
\end{axiom}

MONOTONICITY FOR GOODNESS is a weaker version of the Goodness Axiom proposed by Kumar et al.~(\citeyear{kumar2016edge}). While the Goodness Axiom concerns any predecessors, MONOTONICITY FOR GOODNESS focuses on unanimous and homogeneous sets of them.

Next, any node, $v \in V$, that has the best possible rating given by each of its predecessors and all its predecessors have the highest possible fairness, then $v$ should have maximal possible goodness (see Figure~\ref{fig:axiom3}).
\begin{axiom}[MAXIMAL TRUST]\label{axiom:maximal:trust}
For any $v \in V$ such that  $\forall u_i \in \pred(v) \ \ f(u_i) = 1 \ \land\ \omega(u_i,v) = 1$, it holds that, $\forall \Delta \in \mathbb{R}, \ g(v)=1.$
\end{axiom}

The following axiom states that, when $v \in V$ is rated by $k$ groups, where the nodes in each group are homogeneous and unanimous w.r.t. $v$, then the goodness of $v$ should be equal to the \emph{weighted average} of the ratings achieved when these groups separately rate $v$.

\begin{axiom}[GROUPS FOR GOODNESS]\label{axiom:groups}
Given $v \in V$, let $\{S_1,\ldots,S_k\}$ be a partition of $\pred(v)$  such that $\forall i \in [k]$ there exists $f_i,\omega_i: \forall_{u_j \in S_i} f(u_j) = f_i \land \omega(u_j, v) = \omega_i$. Then, it holds that:
\[g(v) = \frac{\sum_{i \in [k]}(|S_i|\times g_{i}(v))}{\sum_{i \in [k]}|S_i|},\] 
\noindent where $g_i(v)$ denotes the rating of the node $v$ rated only by the homogeneous and unanimous predecessors from group~$i$. 
\end{axiom}

Finally, we have the following baseline:

\begin{axiom}[BASELINE FOR GOODNESS]\label{axiom:baseline}
Any $v$ with $indeg(v) = 0$ has $g(v) = 1$.
\end{axiom}

\begin{figure}[t]
    \centering
    \includegraphics[width=0.5\textwidth]{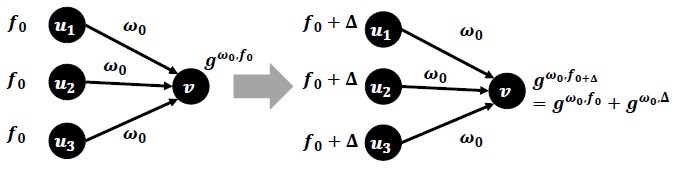}
    \caption{Axiom~\ref{axiom:increase:fairness:def} says that the increase in the homogenous \emph{fairness} of the unanimous predecessors results in the proportional increase of the \emph{goodness} of the rated node.}
    \label{fig:axiom1}
\end{figure}

\begin{figure}[t]
    \centering
    \includegraphics[width=0.5\textwidth]{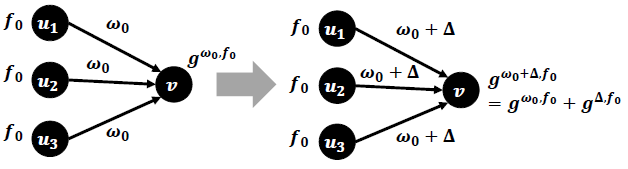}
    \caption{Axiom~\ref{axiom:increase:weight:def} says that the increase in the homogenous \emph{weight} of the unanimous predecessors results in the proportional increase of the \emph{goodness} of the rated node.}
    \label{fig:axiom2}
\end{figure}

We will now show that the above axioms uniquely define the goodness part of the $\FGA$ function.
\begin{restatable}{thm}{goodness}
\label{thm:goodness}
For any fixed fairness function $f(u)$, the SMOOTH GOODNESS, INCREASE WEIGHT, MONOTONICITY FOR GOODNESS, MAXIMAL TRUST, GROUPS FOR GOODNESS, and BASELINE FOR GOODNESS axioms uniquely define goodness function~\eqref{def:goodness}.
\end{restatable}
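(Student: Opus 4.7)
The plan is to first handle the degenerate $indeg(v)=0$ case directly via BASELINE FOR GOODNESS, then reduce the general problem to the homogeneous-unanimous case (all predecessors of $v$ have the same fairness and rate $v$ with the same weight), and finally use GROUPS FOR GOODNESS applied to the singleton partition of $\pred(v)$ to lift the formula back to arbitrary predecessors.

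For the homogeneous-unanimous case, write $g^{\omega_0, f_0}(v)$ for the goodness of $v$ when every $u\in\pred(v)$ has $f(u)=f_0$ and $\omega(u,v)=\omega_0$. First I would fix $\omega_0$ and apply SMOOTH GOODNESS to see that the map $\Delta\mapsto g^{\omega_0,\Delta}(v)$ satisfies Cauchy's functional equation. MONOTONICITY FOR GOODNESS (restricted to homogeneous-unanimous configurations, which is exactly the regime the axiom addresses) then forces this map to be monotone in $\Delta$, and a monotone additive function on $\mathbb{R}$ is linear; hence $g^{\omega_0, f_0}(v) = f_0\cdot g^{\omega_0, 1}(v)$. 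Running the symmetric argument with INCREASE WEIGHT in place of SMOOTH GOODNESS and the weight half of MONOTONICITY FOR GOODNESS yields $g^{\omega_0, 1}(v) = \omega_0\cdot g^{1,1}(v)$, and MAXIMAL TRUST pins down $g^{1,1}(v) = 1$. Putting these together, $g^{\omega_0, f_0}(v) = f_0\cdot \omega_0$.

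To finish, fix an arbitrary $v$ with $\pred(v) = \{u_1,\ldots,u_m\}$ and apply GROUPS FOR GOODNESS to the partition into singletons $\{\{u_1\},\ldots,\{u_m\}\}$. Each singleton is trivially homogeneous and unanimous, so by the previous paragraph the per-group goodness equals $g_i(v) = f(u_i)\cdot\omega(u_i,v)$. The weighted-average formula then gives
\[
g(v) \;=\; \frac{\sum_{i=1}^m f(u_i)\cdot\omega(u_i,v)}{m} \;=\; \frac{1}{indeg(v)}\sum_{u\in\pred(v)} f(u)\cdot\omega(u,v),
\]
which matches \eqref{def:goodness}.

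The main obstacle is squeezing linearity out of the Cauchy equation: without some regularity hypothesis, additive functions on $\mathbb{R}$ can be pathological. MONOTONICITY FOR GOODNESS is the key lever that rules these out, but it only compares two unanimous-homogeneous configurations at a time, so some care is needed to translate it into genuine monotonicity of $\Delta\mapsto g^{\omega_0,\Delta}(v)$ on a sufficiently rich domain. A related subtlety is that SMOOTH GOODNESS and INCREASE WEIGHT are stated for all $\Delta\in\mathbb{R}$, while fairness and weights naturally live in bounded intervals; the cleanest fix is to first deduce the linear formula on rationals inside the admissible range and then extend to the full range via the monotonicity just established.
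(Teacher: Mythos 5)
Your argument is correct and follows essentially the same route as the paper's own proof: additivity from SMOOTH GOODNESS and INCREASE WEIGHT combined with MONOTONICITY FOR GOODNESS via Cauchy's functional equation yields linearity in fairness and in weight, MAXIMAL TRUST fixes the constant to $1$, GROUPS FOR GOODNESS aggregates (your singleton partition versus the paper's grouping by common $(f_i,\omega_i)$ is an immaterial difference), and BASELINE FOR GOODNESS handles $indeg(v)=0$. Your closing remarks on ruling out pathological additive solutions and on the domain of $\Delta$ are in fact more careful than the paper's treatment, which simply cites monotonicity together with Cauchy's equation and takes the function to be defined for all real $\omega,f$.
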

\begin{proof}
It is easy that the goodness function~\eqref{def:goodness} meets the conditions of the above axioms.
Now, let us define some $g_i^{w_0,f_0}(v)$ for a node $v$ rated by homogeneous and unanimous nodes w.r.t. $v$, i.e. all $u \in \pred(v)$ have the same \emph{fairness} $f(u) = f_0$ and they rate $v$ with the same rating $\omega(u,v) = \omega_0$,
From SMOOTH GOODNESS and MONOTONICITY FOR GOODNESS and the Cauchy's equation \citep{functionaleqn}, we know that $g_i(v)$ is linearly dependant on $f(u)$ when $\omega(u,v)$ is fixed to some $\omega_0$, i.e. $g_i^{\omega_0,f(u)}(v) = a_{\omega_0}\times f(u)$ for some constant $a_{\omega_0} \in \mathbb{R}$. Again, from INCREASE WEIGHT, MONOTONICITY FOR GOODNESS, and the Cauchy's equation we know that $g_i(v)$ is linearly dependant on $\omega(u,v)$ when $f(u)$ is fixed to some $f_0$, i.e. $g_i^{\omega(u,v), f_0}(v) = \omega(u,v)\times b_{f_0}$.
The two equations above imply that for a set of homogeneous and unanimous predecessors, $g_i^{\omega(u,v), f(u)}(v) = g_i^{\omega, f}(v) = a_{\omega}\times f = b_{f}\times \omega$. Since the function $g_i^{\omega(u,v), f(u)}(v)$ is defined for all $\omega, f \in \mathbb{R}$, this equality implies that $a_{\omega} = \frac{b_{f}}{f}\times \omega$. Furthermore, since $a_{\omega}$ is not dependant on $f$ by definition, then $a_{\omega} = c \times \omega$ for some $c \in \mathbb{R}$. We conclude that $g_i^{\omega, f}(v) = c\times f\times \omega$.
From MAXIMAL TRUST, we get that $g^i(v) =f(u)\times \omega(u,v)$. Now, when a node does not have unified predecessors, we can divide its predecessors to groups with fixed $(f_i, \omega_i)$. From GROUPS FOR GOODNESS, we get: \[ g(v) = \frac{\sum_{i \in [k]}(|S_i|\times g_{i}(v))}{\sum_{i \in [k]}|S_i|} = \frac{\sum_{i \in [k]}(|S_i|\times f_i\times \omega_i)}{\sum_{i \in [k]}|S_i|} = \] \[= \frac{1}{in(v)} \sum_{u \in \pred(v)} f(u) \times \omega(u,v). \] From BASELINE FOR GOODNESS, $g(v) = 1$ for $v$ with $\indeg(v) = 0$.
\end{proof}

\begin{figure}[t]
    \centering
    \includegraphics[width=0.45\textwidth]{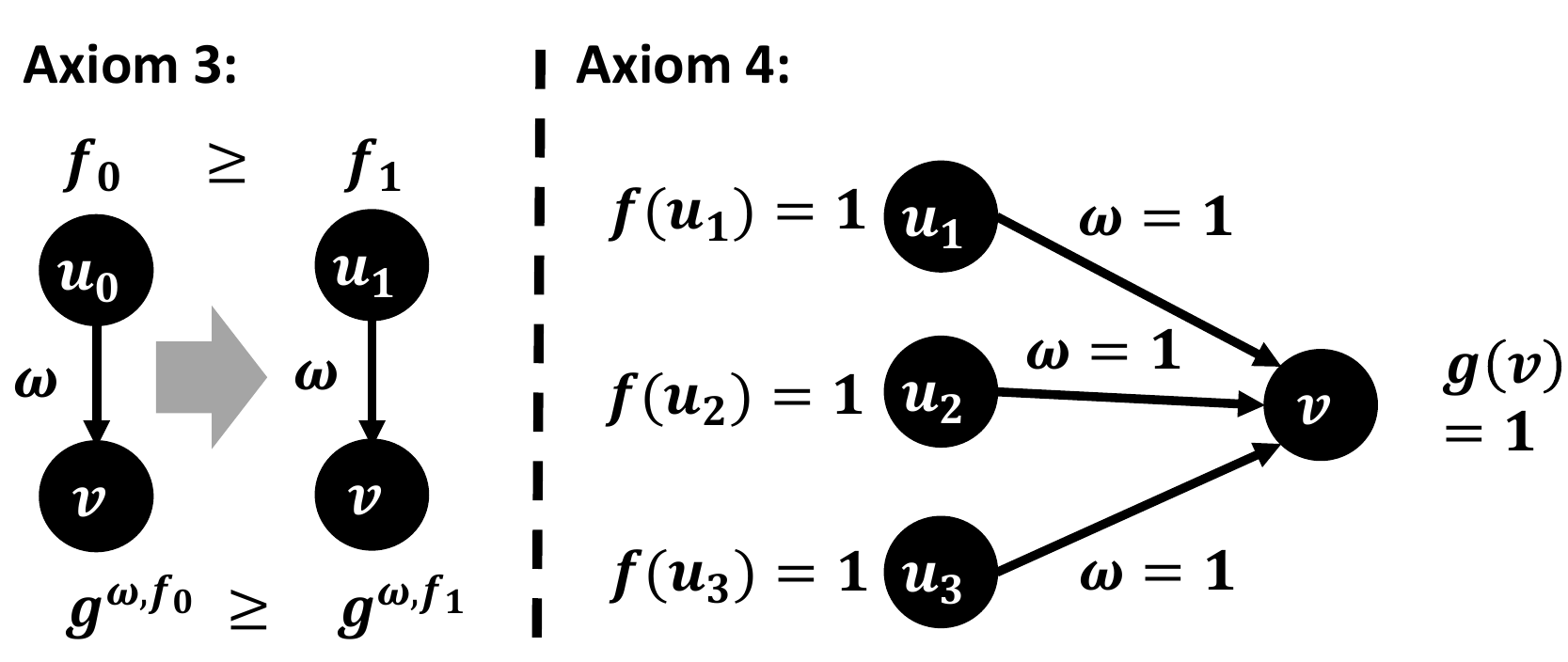}
    \caption{Axiom~\ref{axiom:monotonous:goodness:def} says that fairer nodes should have a bigger impact on the goodness of the rated nodes. In the figure, we assume that $\omega > 0$ and node $u_0$ has bigger fairness than node $u_1$. Axiom~\ref{axiom:maximal:trust} says that a fully trusted node should have the goodness value equal to 1.}
    \label{fig:axiom3}
\end{figure}

\subsection{Fairness axiomatization}
In this section, we present the axiomatization of the fairness part of the $\FGA$ function. The fairness axiomatization is defined with respect to the rating error of nodes. We define the error of node $v$ rating the node $u$ as $d=|\omega(v,u)-g(u)|$.

Our first axiom stipulates that the fairness of a node that makes an average error when rating other nodes is equal to the average of the fairness values of nodes in extreme cases.

\begin{axiom}[SMOOTH FAIRNESS]\label{axiom:smooth:fairness}
Assume a node $v$ rates a set of its successors $S$ with equal error $d  = |g(u)-\omega(v,u)|$ for $u \in S$ in one setting, and with an error $D$ in another setting, then $f^{\frac{d+D}{2}}(v) = \frac{f^{d}(v) + f^{D}(v)}{2}$.
\end{axiom}

The following axiom states that fairness of the nodes that rate more accurately should rise (see Figure~\ref{fig:axiom8}).

\begin{figure}[t]
    \centering
    \includegraphics[width=0.32\textwidth]{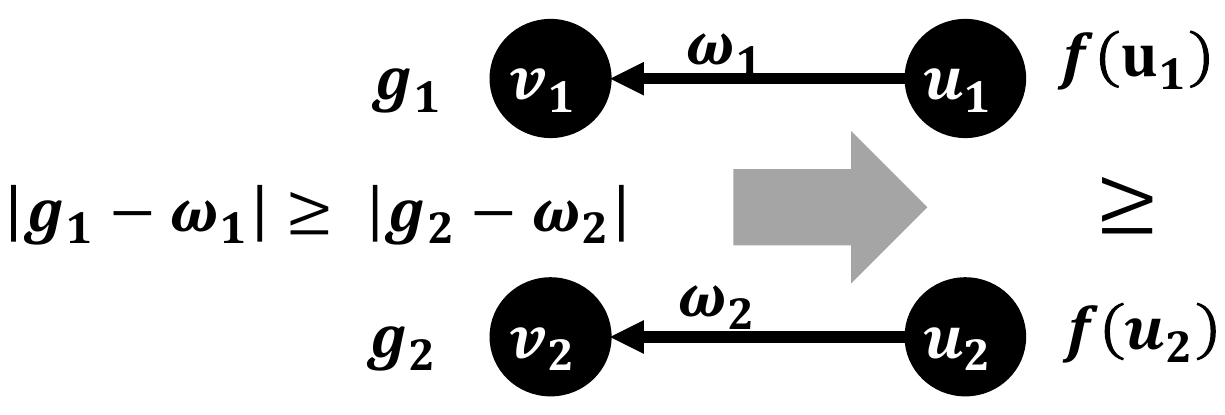}
    \caption{Axiom~\ref{axiom:monotonous:fairness:def} says that the fairness of a node should rise when the node gives more precise ratings.}
    \label{fig:axiom8}
\end{figure}

\begin{axiom}[MONOTONICITY FOR FAIRNESS ] \label{axiom:monotonous:fairness:def}
Let $u_1$ and $u_2$ be two nodes rating their sets of successors $S_1$, $S_2$. $S_i$ consists of nodes $v_i$ rated by $u_i$ with identical error $d_i  = |g(u_i)-\omega(u_i,v_i)|$.
If $d_1 > d_2$, then $f(u_1) \leq f(u_2)$.
\end{axiom}
This is a weaker version of the Fairness Axiom in \cite{kumar2016edge}. In our case, it is defined only for a set of successors $S_i$ rated with \emph{equal} rate by the node $u_i$.

Next, we stipulate that when a node makes maximal errors when rating all of its neighbors, then its fairness should be $0$, and when it always agrees with the actual goodness value of its rated nodes, then its fairness is $1$ (see Figure~\ref{fig:axiom9}).

\begin{axiom}[OBVIOUS FAIRNESS METRIC]\label{axiom:obvious:fairness}
Assume node $v$ rates all its successor nodes $S$ with distance $d = |g(u)-\omega(v,u)| = 0$, for $u \in S$, then $f(v) = 1$. Assume a node $v$ rates all its successor nodes $S$ with distance $d = |g(u)-\omega(v,u)| = 2$, for $u \in S$, then $f(v) = 0$. 
\end{axiom}

Also, when $v \in V$ rates its neighbors that can be divided to $k$ such groups that each node in a group is rated by $v$ with the same distance as other nodes in this group, then the fairness of $v$ should be equal to the \emph{weighted average} of its fairness in a setting where $v$ rates these groups separately.

\begin{axiom}[GROUPS FOR FAIRNESS]\label{axiom:groups:fairness}
Given $v \in V$, let $\{S_1,\ldots,S_k\}$ be a partition of $\succe(v)$  such that $\forall i \in [k]$ there exists $d_i: \forall_{u_j \in S_i} |g(u_j)-\omega(v,u_j)| = d_i$. Then:
\[f(v) = \frac{\sum_{i \in [k]}(|S_i| \times f_{i}(v))}{\sum_{i \in [k]}|S_i|},\] where $f^i(v)$ is the fairness of $v$ rating group $i$.
\end{axiom}
Finally, a baseline for node $v$ with $outdeg(v) = 0$ is:
\begin{axiom}[BASELINE FOR FAIRNESS]\label{axiom:baseline:fairness}
A node $v$ with $outdeg(v) = 0$ has $f(v) = 1$.
\end{axiom}
\begin{restatable}{thm}{fairness}
\label{thm:fairness}
For fixed goodness function $g(n)$, the SMOOTH FAIRNESS, MONOTONICITY FOR FAIRNESS , OBVIOUS FAIRNESS METRIC, GROUPS FOR FAIRNESS, and BASELINE FOR FAIRNESS axioms uniquely define fairness function \eqref{def:fairness}.
\end{restatable}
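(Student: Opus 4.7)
The plan is to mimic the structure of the proof of Theorem~\ref{thm:goodness}: first reduce to the uniform-error case, determine the functional form of fairness there via a Cauchy/Jensen-type argument, and then lift to arbitrary nodes via the grouping axiom and the baseline.

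First I would verify (routinely, in one line) that the function in \eqref{def:fairness} satisfies each of the five axioms, and then proceed with the uniqueness direction. Fix the goodness function $g$. For any node $v$ whose successors are all rated with a single common error $d = |\omega(v,u) - g(u)|$, denote the resulting fairness by $f^{d}(v)$. By SMOOTH FAIRNESS, the map $d \mapsto f^{d}(v)$ satisfies the Jensen midpoint equation
\[
f^{(d+D)/2}(v) \;=\; \frac{f^{d}(v) + f^{D}(v)}{2}
\]
on the domain $d,D \in [0,2]$, and by MONOTONICITY FOR FAIRNESS it is monotone nonincreasing in $d$. A midpoint-convex monotone function on an interval is affine (this is the standard consequence of Cauchy's functional equation under a mild regularity assumption; see \cite{functionaleqn}), so $f^{d}(v) = a + b\,d$ for some constants $a,b \in \mathbb{R}$.

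Next, OBVIOUS FAIRNESS METRIC pins down the two endpoints: $f^{0}(v) = 1$ forces $a = 1$, and $f^{2}(v) = 0$ forces $a + 2b = 0$, hence $b = -\tfrac12$. Thus, in the uniform-error case, the value must be
\[
f^{d}(v) \;=\; 1 - \frac{d}{2}.
\]
For a general node $v$ with $outdeg(v) \geq 1$, partition $\succe(v)$ into singletons $S_{u} = \{u\}$ for each $u \in \succe(v)$; every singleton is trivially homogeneous in the error $d_{u} = |\omega(v,u)-g(u)|$, so GROUPS FOR FAIRNESS applies and gives
\[
f(v) \;=\; \frac{\sum_{u \in \succe(v)} |S_{u}| \cdot f^{d_{u}}(v)}{\sum_{u \in \succe(v)} |S_{u}|} \;=\; \frac{1}{outdeg(v)} \sum_{u \in \succe(v)} \Bigl( 1 - \tfrac{|\omega(v,u) - g(u)|}{2} \Bigr),
\]
which rearranges to exactly \eqref{def:fairness}. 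Finally, BASELINE FOR FAIRNESS fixes $f(v) = 1$ when $outdeg(v) = 0$, completing the case analysis.

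The main obstacle I expect is the justification of affine form from the Jensen equation: SMOOTH FAIRNESS alone only yields midpoint-linearity (rational-dyadic affinity), and without a regularity hypothesis there exist pathological solutions. The axiom MONOTONICITY FOR FAIRNESS is precisely what rules these out, so care is needed to invoke the classical result that a midpoint-convex and monotone function on a real interval is affine. Once that step is in hand, OBVIOUS FAIRNESS METRIC fixes the two unknown constants, and GROUPS FOR FAIRNESS together with BASELINE FOR FAIRNESS handle the passage from uniform to arbitrary neighbourhoods in a single line each, exactly mirroring the structure already used in the proof of Theorem~\ref{thm:goodness}.
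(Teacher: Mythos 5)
Your proposal is correct and follows essentially the same route as the paper's proof: SMOOTH FAIRNESS plus MONOTONICITY FOR FAIRNESS (via Jensen's equation) give affinity in the error $d$, OBVIOUS FAIRNESS METRIC fixes the two constants to yield $1-d/2$, and GROUPS FOR FAIRNESS together with BASELINE FOR FAIRNESS extend this to arbitrary nodes; partitioning into singletons rather than into equal-error classes is an immaterial variation. Your explicit remark that monotonicity is what excludes pathological midpoint-affine solutions is a point the paper leaves implicit, but it is the same argument.
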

\begin{proof}
It is easy that the fairness function~\eqref{def:fairness} meets the conditions of the above axioms.
Now, let us define $f_i(v)$ for a node $v$ and a group of nodes with some fixed error $d_i =|g(u)-\omega(v,u)|$,
From SMOOTH FAIRNESS, MONOTONICITY FOR FAIRNESS  and the Jensen's equation \citep{functionaleqn}, we know that $f_i(v)$ is linearly dependant on $d_i = |g(u)-\omega(v,u)|$, i.e. $f_i(v) = b+a \times d_i$ for some $a,b \in \mathbb{R}$.
From OBVIOUS FAIRNESS METRIC we get that $f_i(v) =1-d_i/2=1-|g(u)-\omega(v,u)|/2 $. Now when a node does not have unified successors, we can divide its successors to groups with fixed $|g(u_j)-\omega(v,u_j)| = d_i$. From GROUPS FOR FAIRNESS: \[ f(v)= \frac{\sum_{i \in [k]}(|S_i| \times f_{i}(v))}{\sum_{i \in [k]}|S_i|} = \frac{\sum_{i \in [k]}(|S_i| \times (1-d_i/2))}{\sum_{i \in [k]}|S_i|} = \] \[ = 1- \frac{1}{out(v)} \sum_{u \in Succ(v)} |g(u)-\omega(v,u)|/2. \]
Finally, from BASELINE FOR FAIRNESS we get that $f(v) = 1$ for nodes with $outdeg(v) = 0$.
\end{proof}

\begin{figure}[t]
    \centering
    \includegraphics[width=0.4\textwidth]{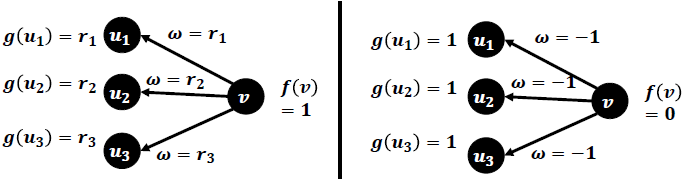}
    \caption{Axiom~\ref{axiom:obvious:fairness} says that a node that rates with perfect accuracy (its rating error $d=0$ for all the nodes that it rates) should have maximal fairness equal to $1$, and a node that makes the biggest errors (its rating error $d=2$ for all the nodes that it rates) should have minimal fairness equal to $0$.}
    \label{fig:axiom9}
\end{figure}
\subsection{FGA axiomatization}
The above results imply the final axiomatization result:
\axiomatization*
\begin{proof}
The proof follows from the proofs of Theorems~\ref{thm:goodness}~and~\ref{thm:fairness}.
\end{proof}

\section{Ommitted complexity proofs}
\vertexcover*
\begin{proof}[Proof of Theorem~\ref{thm:vertexcover}]
\begin{figure}[t]
    \centering
    \includegraphics[width=0.4\textwidth]{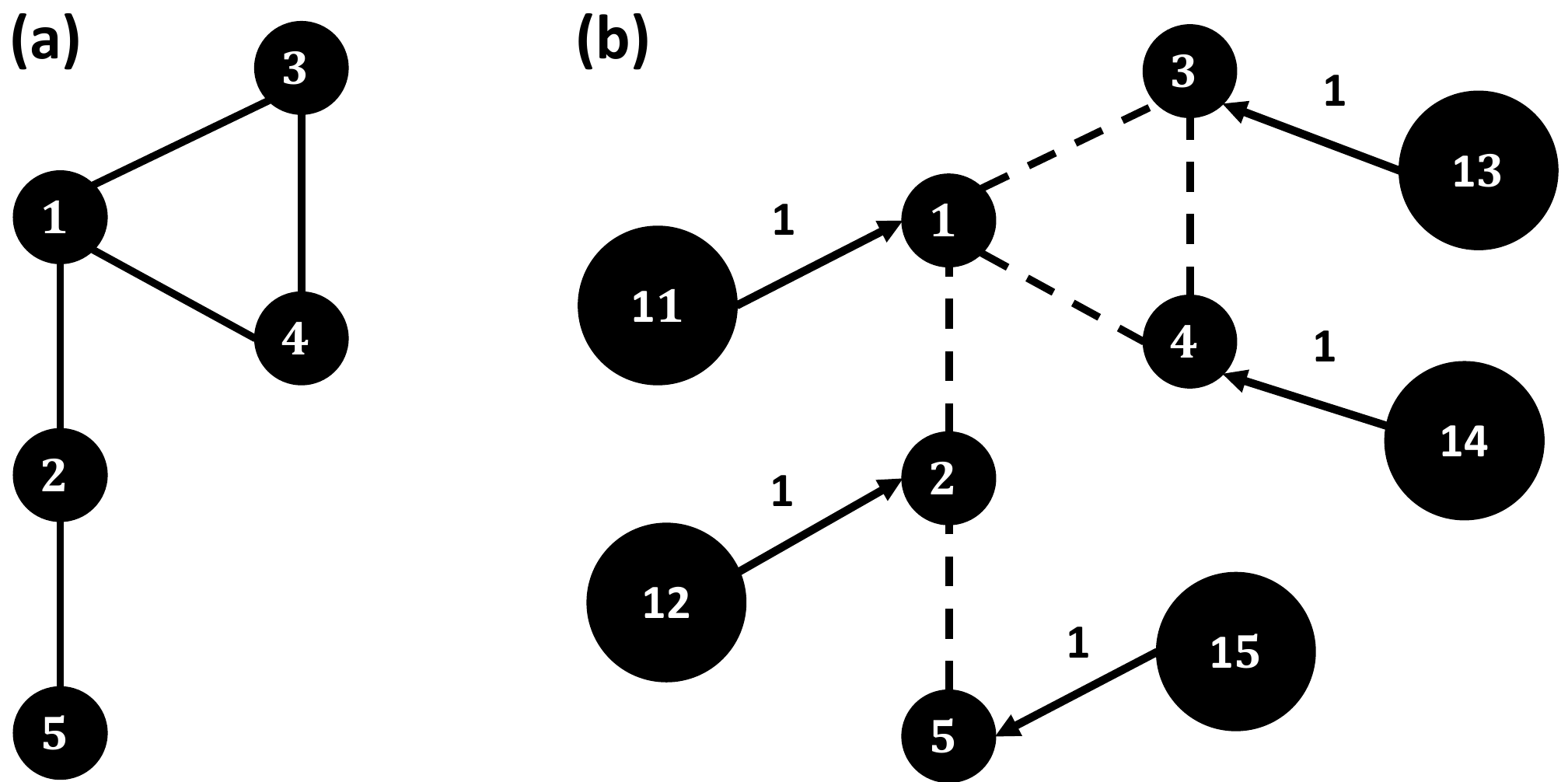}
    \caption{(a) The original VC problem ($k=2$) with vertices $V = \{1,2,3,4,5\}$ and edges as in the picture. This set can be covered with 2 nodes - $1$ and $5$. (b) The corresponding $\DMT$ instance with number of moves $k=2$, threshold $t=-1$, the set of attacked edges $H$, as in the original problem, and a new set of attacking nodes $\{11,12,13,14,15\}$ marked with big circles for which attacking edges were created $\{(11, 1), (12, 2), (13,3), (14,4), (15, 5)\}$ - each with the weight of $1$. To solve this problem we need to modify the values of the edges $\{(11, 1), (15, 5)\}$ to $-1$.}
    \label{figure:1}
\end{figure}

We reduce from the VERTEX COVER (VC) problem. In the VC problem we are given a parameter $k$ and a graph $G = (V, E)$ and we need to decide whether there exists a set of vertices, $U \subseteq V$, $|U| \leq k$, that ``cover'' the set of the edges of this graph, i.e., every edge from $E$ is adjacent to at least one node in $U$.

Given the VC problem $(G,k)$, where $G=(V,E)$, we create an instance of our problem $\DMT = (G'=(V', E', \omega), A, \TP, I,  t, k')$, by adding, for every node $v \in V$, an \emph{attacking} node $a_v$ with an edge $(a_v,v)$ with $weight = 1$. We set the target threshold as follows: $t = -1$. We observe that $A = \{a_v: v \in V\}$, $V' = V\cup A$, $E' = \{(a_v,v): v \in V\}.$ Finally, the set of  intermediary vertices is $I = V$, and the set of attacked edges $\TP$ is $E$, i.e., the set of the edges from $G$. We set $k'=k$. 
See Figure~\ref{figure:1} for an example.

\begin{figure}[t]
    \centering
    \includegraphics[width=0.30\textwidth]{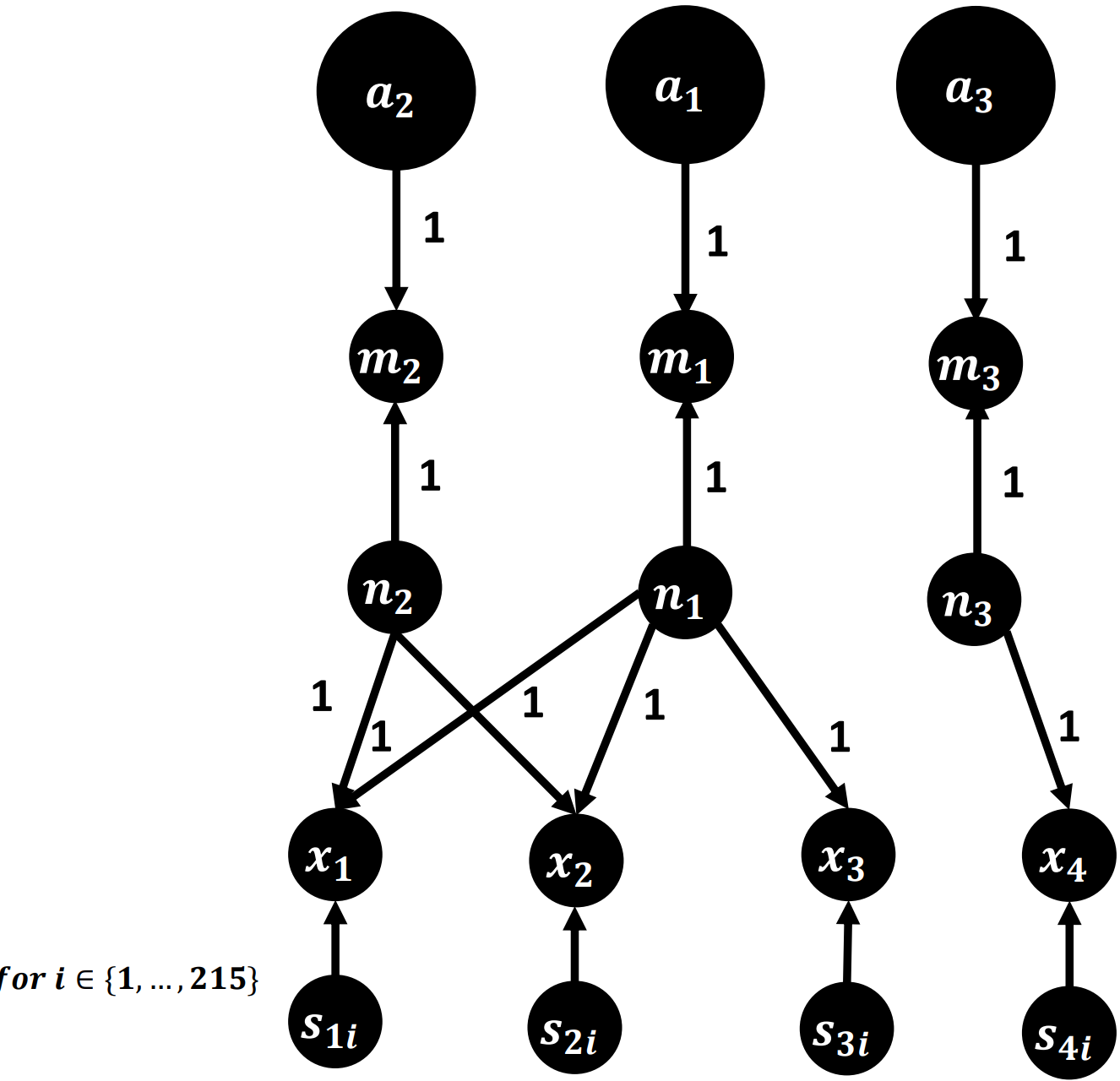}
    \caption{The corresponding $\DNR$ instance with the set of targets $\{1,2,3,4\}$, the set of attackers $\{a_1, a_2, a_3\}$. The intermediary nodes created for every set from $\mathcal{S}$, i.e. $\{n_1, n_2, n_3, m_1, m_2, m_3\}$, $215$ stabilising nodes $s_{ji}$ for each node $x_j$, $k=2$, threshold $t=1-\epsilon$.}
\label{figure:dnrred}
\end{figure}

We now need to show that the reduction is correct. Firstly, given a graph $G = (V, E)$ and its vertex cover of size $k$, i.e., $U \subseteq V$ with $U = \{x_1, ..., x_k\}$, we show that its corresponding problem, $\DMT$, as outlined above can be solved.
To this end, we modify all of the $k$ edges $(a_{x_i},x_i)$, where $x_i \in U$, by setting  each of them to $-1$. Now:
\begin{itemize}
    \item due to the fact that computing $(f(u), g(u)) = \FGA(G, u)$ of a node $u$ adjacent only to a single directed edge results in $g(u)$ being equal to the weight of this single edge, then for all $x_i \in U$ we have $g(x_i) = -1$;
    \item from the definition of the $\FGA$ function, the fairness of nodes with out-degree of $0$ is equal to $1$. Hence, for all $u \in V$ we have that $f(u) = 1$.
\end{itemize}    
From these we conclude that all of the connections in the target set $\TP$ are decreased to the threshold $t=-1$, i.e. either $f(u)\times g(v) = 1\times(-1) = -1$ or $f(v)\times g(u) =-1$ for every pair ${u,v} \in \TP$.

For the other direction, assume that we have a solution to our problem $\DMT$ that was created as outlined above. 
Recall that the set of attacking nodes is the set of newly created nodes for the $\DMT$ instance, i.e., $A = \{a_v: v \in V\}$, and each of them is connected with a single edge directed towards its corresponding node from $V$, i.e., $(a_v, v)$ for $v \in V$ and weight of these edges equals 1. 
We observe that, from the definition of $\FGA$ and the construction of the $\DMT$ instance, it follows that to modify the values of the predicted connections between pairs $\{u,v\} \in \TP$ (i.e. either $f(u)\times g(v)$ or $f(v) \times g(u)$) one needs to modify the goodness of the nodes in $\dom(\TP)$. This is because there are no outgoing edges from the nodes in $\TP$; thus, fairness of the nodes in $\dom(\TP)$ is constant and equal to 1.

The goodness of the nodes in $\dom(\TP)$ can be modified by changing $(a_v,v)$, where $a_v \in A$, or by adding some new edges between the attackers and the nodes from $\dom(\TP)$. However, to attack a single connection $\{u,v\} \in \TP$, we have to obtain either $g(u) = -1$, or $g(v) = -1$. To this end, since reaching $g(v) = -1$ is only possible when all of the  edges pointed at $v$ have value $-1$, 
it is always necessary to modify the value of the existing edge $(a_v,v)$ as well.
Specifically, whenever we can reach one of the nodes in $\dom(\TP)$ with a modified edge, we are, in fact, ``marking'' all of the edges pointing at this node. Each of these edges corresponds to a pair in $\TP$. If all the pairs are marked, then both the $\DMT$ and $\VC$ problems are solved.
\end{proof}
The proof for the $\IMT$-hardness is analogous with the  opposite signs of the weights of the created/modified edges.
\setcover*
\begin{proof}
[Proof of Theorem~\ref{thm:dnr}] We reduce from the SET COVER (SC) problem. In the SC problem, we are given a set of sets  $\mathcal{S}$, a target set $T$, and a parameter $k$. We need to decide whether it is possible to cover the target set $T$ with at most $k$ sets from $\mathcal{S}$, i.e. whether there exists a subset $S \subseteq \mathcal{S}$ of size at most $k$, such that for all $t \in T$, there exists $S_i \in S$, such that $t \in S_i$.
Given an SC problem $(\mathcal{S},T, k)$, we create an instance of the $\DNR$ problem as follows:
\begin{itemize}
\item the set of target nodes in the $\DNR$ problem is the set $T$ from the SC problem; 
\item For every set $S_i$ from $\mathcal{S}$ we create two \emph{intermediary} nodes $n_i$ and $m_i$ one link $(n_i, m_i)$ and $|S_i|$ links $(n_i, t)$ for $t \in S_i$, each of them with weight 1. We denote the set of all intermediary nodes $n_i$ which point at the nodes $m_i$ as $\mathcal{N}_{int}$;
\item for each $S_i \in \mathcal{S}$ we create an attacking node $a_i$ and a link $(a_i, m_i)$ with weight 1; and
\item we set the intermediary set $I$ in the $\DNR$ problem to the set of $m_i$ nodes,
\item given $d_{max} = max\ \{outdeg(n_i): n_i \in \mathcal{N}_{int}\}$, we add $l = 8d_{max}^3-d_{max}+1$ stabilising nodes $s_i$ to each target node $x_j \in T$, they are required in the reduction to ensure that the change in goodness of the target nodes will not affect the goodness of some other nodes too much,
\item  we set the target threshold in the $\DNR$ problem to be $1-\epsilon$, where $\epsilon = \frac{1}{4*d_{max}(d_{max}+l)}$,
\item we set the budget to $k$.
\end{itemize}

In Figure~\ref{figure:dnrred}, we present a sample $\DNR$ construction for the SC problem in which $T = \{1,2,3,4\}$ has to be covered  with at most $k=2$ sets from $\mathcal{S} = \{ \{1,2\}, \{1,2,3\}, \{4\} \}$.

We now need to show that the reduction is correct. Firstly, let us consider an SC problem $(\mathcal{S},T, k)$ and its set cover of size $k$, consisting of sets $S_i \in \mathcal{S}$ with indexes $i \in \{x_1, \ldots, x_k\} = U$. We will show that our corresponding $\DNR$ problem created as in the instructions above can be solved. To this end, we modify the value of 
each link $(a_{x_i}, m_{x_i})$ for $i \in U$ to $-1$.

In this case, the goodness of the intermediary node decreases to a value bounded by the factor introduced by $\omega((a_{x_i}, m_{x_i})) \times f(a_{x_i}) = -1 \times f(a_{x_i}) \leq 0$ and the factor introduced by $\omega((n_{x_i}, m_{x_i})) \times f(n_{x_i}) \leq 1$, resulting in a value $g(m_{x_i}) \leq \frac{1}{2}$. This implies the decrease in the fairness value of the intermediary node $n_{x_i}$, resulting in $f(n_{x_i}) \leq 1 - \frac{1-\frac{1}{2}}{2*outdeg(n_{x_i})} \leq 1 - \frac{\frac{1}{2}}{2*d_{max}}$. Finally this decreases the goodness values of all nodes $v_j$ rated by $n_{x_i}$ to a value less or equal to $g(v_j) \leq \frac{indeg(v_j)-1}{indeg(v_j)}+\frac{1}{indeg(v_j)}(1 - \frac{\frac{1}{2}}{2*d_{max}}) \leq 1 - \frac{1}{4d_{max}*indeg(v_j)} \leq 1 - \frac{1}{4*d_{max}*(l+d_{max})} \leq 1 - \epsilon$.
Since $T$ is covered by the sets indexed by indices in $U$, then  modifying the  value of the links $(a_{x_i}, m_{x_i})$ decreases the rating of all of the target nodes in the $\DNR$ problem below or to the threshold.

For the other direction, assume that we have a solution to our corresponding problem $\DNR$. In fact, since the only allowed actions are edge additions and weight updates to the nodes from $I = \{m_i\}_i$, the only way of modifying the goodness of the target nodes is by modifying the \emph{fairness} of the intermediary nodes $n_i$. Either modifying an edge $(a_i, m_i)$ or adding an edge $(a_j, m_i)$ marks a set $S_i \in \mathcal{S}$ and sets the value of the goodness of the nodes $k \in S_i$ below the threshold $1 - \epsilon$. 
One needs to see that it is necessary to rank the node $m_{i}$ to mark the node $v_j \in S_i$, otherwise its goodness value will stay above threshold (i.e. $goodness(v_j) > 1 - \epsilon$. We achieve this result by introducing $l$ stabilising nodes for every $v_j \in T$. From the properties of the given construction one may conclude that marking nodes in the $\DNR$ problems implies marking sets in the set cover problem.

We will use an intermediary Theorem~\ref{thm:weakinfluence}. It shows that for a node $x$ when fairness of its $k$ rating nodes is decreased by $\Delta$, and there are $l$ stabilising nodes rating it with $1$, then the goodness value of the node $x$ does not change too much - i.e. $g(x) \geq 1 - 2 \frac{k}{l+k}\times \Delta$.

Using this result we may see that even in an edge case the nodes in the target set do not have their \emph{goodness} value changed below the threshold if they are not marked properly as mentioned before. In the edge case a node $v_i$ may be indirectly influenced by a set of $k_1$ nodes (denoted $\mathcal{K}$), which have their fairness value indirectly changed because they rate at most $k_2$ nodes (denoted $\mathcal{L}$) which are marked by at most $k_3$ intermediary nodes which change their fairness value by at most $\Delta = 1$. Note that $k_1,k_2,k_3 \leq d_{max}$.

In this case the goodness value of the nodes in $\mathcal{L}$ can be bounded by the above theorem $g(v_j) \geq 1 - 2 \frac{d_{max}}{d_{max}+l}$. This implies that the fairness of the nodes in $\mathcal{K}$ falls to a value not less that $f(n_m) \geq 1 - \frac{d_{max}}{d_{max}+l}$. This fairness modification will further influence the target nodes, but since in any scenario also for the target nodes we have $g(v_i) \geq 1 - 2 \frac{d_{max}}{d_{max}+l}$, then the fairness value of the intermediary nodes will not fall below $1 - \frac{d_{max}}{d_{max}+l}$. Finally we can conclude that  the nodes in the target set are influenced by at most $g(v_i) \geq 1 - 2 \frac{d_{max}}{d_{max}+l}\frac{d_{max}}{d_{max}+l}$. We can see that when $l$ is big enough, this value never reaches the threshold $\epsilon$, i.e. $1 - 2 \frac{d_{max}}{d_{max}+l}\frac{d_{max}}{d_{max}+l} > 1 - \frac{1}{4*d_{max}(d_{max}+l)}$ when $l > 8d_{max}^3-d_{max}$.
\end{proof}

The proof for the $\INR$-hardness is analogous with the  opposite signs of the weights of the created/modified edges.

\begin{thm}
\label{thm:weakinfluence}
We have a node $x$ that is rated by $k$ influencing nodes $n_i$ (for $i \in [k]$). What is more this node is rated by $l$ other stabilising nodes $s_j$ (for $j \in [l]$). All rates are of value $1$. Suppose the fairness value of the influencing nodes decreases by at most $\Delta$ (after all modifications in the network), then the goodness value of the node $x$ decreases by at most $2\frac{k}{l+k}\Delta$.
\end{thm}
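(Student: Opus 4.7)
The strategy is to expand $g(x)-g'(x)$ directly via the goodness formula and then close the argument by a small bootstrap: the only way the stabilising nodes' fairness values can move is as a consequence of $g(x)$ itself moving, and the fairness formula contributes a contraction factor of $1/2$ that makes the self-reference harmless.

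First, since every in-edge at $x$ has weight $1$ and $indeg(x)=k+l$, applying the goodness definition~\eqref{def:goodness} at $x$ in both states yields
\[ g(x)-g'(x) \;=\; \frac{1}{k+l}\left(\sum_{i=1}^{k}\bigl(f(n_i)-f'(n_i)\bigr) \;+\; \sum_{j=1}^{l}\bigl(f(s_j)-f'(s_j)\bigr)\right). \]
The influencing summands are controlled immediately by hypothesis: each satisfies $|f(n_i)-f'(n_i)|\le\Delta$, so they contribute at most $k\Delta$ in absolute value to the bracket. This is the easy half.

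Second, I would handle the stabilising summands by using that each $s_j$ exists solely to rate $x$ (with $\omega(s_j,x)=1$), so the only term in~\eqref{def:fairness} that can change when we compare the two states is the one corresponding to the edge into $x$. A direct substitution, together with $g(x),g'(x)\in[-1,1]$, gives
\[ |f(s_j)-f'(s_j)| \;\le\; \frac{1}{2}\,|g(x)-g'(x)|. \]
This is the one-half damping factor inherited from the fairness formula, and it is exactly what lets the bootstrap close.

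Finally, setting $\delta := g(x)-g'(x)\ge 0$ (the direction we care about) and substituting both bounds back into the first display, I obtain the self-referential inequality
\[ \delta \;\le\; \frac{k\Delta + \frac{l}{2}\,\delta}{k+l}, \]
which rearranges to $\delta\le\frac{2k\Delta}{2k+l}\le\frac{2k}{k+l}\Delta$, yielding the claim. The main obstacle is exactly this coupling between $\delta$ and the stabilising fairness values: one must observe that the factor $1/2$ in~\eqref{def:fairness} makes the induced recursion strictly contractive (contraction constant $l/(2(k+l))<1/2$), so that $\delta$ can be cleanly isolated. The factor $2$ that appears in the final bound is precisely the price paid for allowing $\delta$ to occur on both sides of the inequality.
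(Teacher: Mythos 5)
Your proposal is correct, but it proves the bound by a genuinely different route than the paper. The paper argues along the iterative computation of $\FGA$: it fixes the worst case via MONOTONICITY FOR GOODNESS, tracks $f^{(t)}(s_j)$ and $g^{(t)}(x)$ step by step, proves by induction the bound $g^{(2t)}(x)\ge 1-\frac{k\Delta}{k+l}\sum_i\bigl[\tfrac{l}{2(l+k)}\bigr]^{2i}$, and sums the geometric series. You instead work directly with the equilibrium (fixed-point) equations \eqref{def:goodness}--\eqref{def:fairness}: subtracting the two equilibria at $x$, bounding the influencing terms by the hypothesis, bounding each stabilising term by $\tfrac12|g(x)-g'(x)|$ (reverse triangle inequality, so you do not even need the paper's use of $g(x)\ge 0$), and solving the resulting self-referential inequality $\delta\le\frac{k\Delta+\frac{l}{2}\delta}{k+l}$. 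Your route is shorter and in fact yields the slightly sharper constant $\frac{2k}{2k+l}\le\frac{2k}{k+l}$; what it costs is the (legitimate) appeal to the fact that the converged $\FGA$ values satisfy the fixed-point equations, whereas the paper stays with the iterates and passes to the limit at the end --- in substance both rely on the convergence result of Kumar et al. Two small points to keep in mind: like the paper's proof, you need the assumption that each stabilising node rates only $x$ (you state it, and it matches how the theorem is used in the $\DNR$ reduction, but it is not explicit in the theorem statement); and the hypothesis gives only the one-sided bound $f(n_i)-f'(n_i)\le\Delta$ rather than the absolute-value bound you quote, but since you only bound $\delta=g(x)-g'(x)$ from above, the one-sided bound is all your argument actually uses.
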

\begin{proof}
 By MONOTONICITY FOR GOODNESS we know that the goodness value of the node $x$ will decrease maximally when we decrease the fairness value of all influencing nodes $n_i$ by exactly $\Delta$.  We can estimate how the fairness value of the stabilising nodes ($f^{(t)}(s_i)$) and the goodness value of the rated node ($g^{(t)}(x)$) will change in the next iterations of the $\FGA$ function computation.

By the $\FGA$ definition, the stabilising nodes $s_i$ which rate only one node $x$ have $f^{(0)}(s_i) = 1$ and $f^{(t)}(s_i) = 1-\frac{|1-g^{(t-1)}|}{2}$ for $t \geq 1$. What is more since all ratings are of value $1$, we know that $g(x) \geq 0$, then $f^{(t)}(s_i) = 1-\frac{1-g^{(t-1)}}{2}$ for $t \geq 1$. The goodness value of the node $x$ rated by $k$ nodes $n_i$ with decreased fairness and $l$ stabilising nodes $s_i$, can be bounded as follows - $g^{(0)}(x) = 1$ and $g^{(2t)}(x) \geq \frac{k}{k+l}(1-\Delta)+\frac{l}{l+k} \times (1-\frac{1-g^{(2t-2)}}{2})$ for $t \geq 1$.
We prove by induction that for $2t \geq 2$ we have $g^{(2t)}(x) \geq 1 - \frac{k\Delta}{k+l}\sum_{2i=0}^{2t-2}[\frac{l}{2(l+k)}]^{2i}$. The statement trivially holds for $2i = 0$. Let's assume it holds for $2t$, then for $2t+2$ we have $g^{(2t+2)}(x) \geq \frac{k}{k+l}(1-\Delta)+\frac{l}{l+k} \times (1-\frac{1-g^{(2t)}}{2}) \geq \frac{k}{k+l}(1-\Delta)+\frac{l}{l+k} \times (1-\frac{1-[1 - \frac{k\Delta}{k+l}\sum_{2i=0}^{2t-2}[\frac{l}{2(l+k)}]^{2i}]}{2}) = 1 - \frac{k\Delta}{k+l}\sum_{2i=0}^{2t}[\frac{l}{2(l+k)}]^{2i}$ what proves the induction. 
We may also further bound this sum $g^{(2t)}(x) \geq 1 - \frac{k\Delta}{k+l}\sum_{2i=0}^{2t-2}[\frac{l}{2(l+k)}]^{2i} \geq 1 - \frac{k\Delta}{k+l}\sum_{2i=0}^{2t-2}[\frac{1}{2}]^{2i} \geq 1 - \frac{2k\Delta}{k+l}(1-(\frac{1}{2})^{t-1}) \geq 1 - 2\Delta\frac{k}{k+l} $
It is also easy to see that $g^{(2t)}(x) = g^{(2t-1)}(x)$, thus we can conclude that $g(x) \geq 1 - 2\Delta\frac{k}{k+l} $.
\end{proof}
\wparam*
\begin{proof}
[Proof of Theorem~\ref{thm:w2}]
The Set Cover problem parameterized by the number of sets $k$ is a $W[2]$-hard problem in the W-hierarchy. Since the reduction in the proof of Theorem~\ref{thm:dnr} runs  polynomial time, the budget of the $\DNR(\INR)$ problem is $k$, this reduction is also a parameterized reduction~\citep{parameterized}.
\end{proof}
\wparamdmt*
\begin{proof}
[Proof of Theorem~\ref{thm:w2dmt}]
One needs to see that a slight modification of the reduction in the proof of Theorem~\ref{thm:dnr} allows to create a parameterized reduction from the Set Cover  problem parameterized by the number of sets $k$ to $DMT(IMT)$ parameterized by the budget $k$. In fact, for a Set Cover problem we can create an instance $\DMT (\IMT) = \left(G=(V, E, \omega),\right.$ $\left.A, \TP, I, t, k\right)$ as in the $\DNR(\INR)$ reduction, but for each node $x$ from the target set $T$ in the corresponding $\DNR$ problem we add a vertex $x'$, and we set $TP = \{ \{x,x'\}: x \in T\}$. In this case since all of the new nodes are disconnected from the graph, the only way to break the connections between the $\{x,x'\}$ links below the given threshold $t$ is to lower the \emph{goodness} value of the nodes $x \in T$ below the given threshold. Again, the reduction runs  polynomial time, the budget of the $DMT(IMT)$ problem is $k$, this reduction is also a parameterized reduction~\citep{parameterized}.
\end{proof}

\section{Manipulating a node directly}

\direct*
\begin{proof}
[Proof of Theorem~\ref{thm:direct}] We provide a successful strategy for the attackers. A subset $S$ of size $ \lceil 2 \times g(u_T) \times indeg(u_T) \rceil$ of the nodes in $A$ creates a new edge between each of them and the attacked node with $\omega(v, u_T) = -1$. Since the attackers want to achieve $g(u_T) < 0$, and $f(v) > \frac{1}{2}$ for every $v \in A$ before the attack, then after a successful attack $f'(v) > \frac{1}{2}$ for every $v \in A$ as well. Before the attack we have $g(u_T) = \frac{\sum_{v \in in(u_T)}f(v) \times \omega(v,u_T)}{indeg(u_T)} $. We need to show that $k =2\times indeg(u_T) + 1$  edges are enough to change the $g(u_T)$ to a value $g'(u_T)$ lower than $0$. First we observe that since $g'(u_T) < 0$ after the attack, the $f'(v) \geq f(u)$ for $v \in in(u_T)$ if $\omega(v, u_T) \geq 0$, otherwise $f'(v) \leq f(u)$ for $v \in in(u_T)$ if $\omega(v, u_T) < 0$. This implies that $\sum_{v \in in(u_T)} (f(v) \times \omega(v,u_T)) \geq \sum_{v \in in(u_T)} (f'(v) \times \omega(v,u_T))$.

In conclusion, after adding $k$ edges we obtain:
\[ \frac{1}{indeg(u_T)+k} \Big[ \sum_{v \in in(u_T)} (f(v) \times \omega(v,u_T)) - \frac{k}{2}  \Big] < 0 \] if and only if
\[ k > 2 \sum_{v \in in(u_T)} (f(v) \times \omega(v,u_T) = \]

\[2 \sum_{v \in in(u_T)} (f(v) \times \omega(v,u_T) \times \frac{indeg(u_T)}{indeg(u_T)} \leq \] \[ 2 \times g(u_T) \times indeg(u_T) \]
\end{proof}

\section{Indirect Sybil Attack}
We below provide an additional proof omitted in Section Indirect Sybil Attack.
\directtwo*
\begin{proof}
[Proof of Theorem~\ref{thm:direct2}]
Note that in the Equation~\ref{assessment} in the proof of Theorem~\ref{indirect}, one could assess the value of $\Delta g(i=t)$ by selecting $indeg_{max}(i)$ such that $\Delta g(i)$ is maximized. In this case $\Delta g(i) \leq \Big| \frac{ \sum_{u \in \pred(i)} \Delta f(u)\times \omega(u,i) }{indeg_{max}(i)} \Big| + \frac{1}{indeg_{max}(i)}$. The metric $||M||_{\infty}$ is still bounded by $\frac{1}{2}$ as in the Equation~\ref{inftymetric} which implies the result.
\end{proof}

\section{Datasets' basic statistics}
\label{sec:stats}
Figure~\ref{fig:indeg} presents the histogram of the nodes sorted per indegree. Table~\ref{tab:nums} shows how many random samples were used to simulate direct, indirect attacks for all sizes of the attacking sets considered. For mixed attacks, for each $k1,k2 \in \{1,\ldots,6\}$, $\geq 26$ samples were used for Bitcoin OTC, $\geq 12$ samples were used for Bitcoin Alpha, and $\geq 17$ samples were used for RFA Net.
\begin{figure*}[t]
    \centering
    \includegraphics[width=0.33\textwidth]{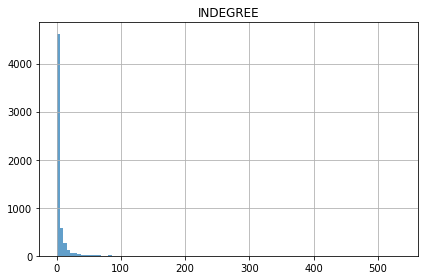}
    \includegraphics[width=0.33\textwidth]{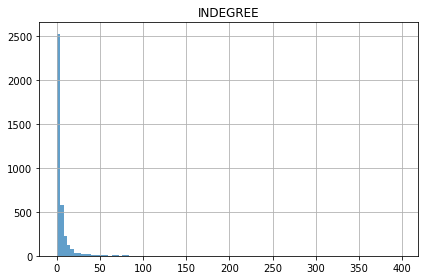}
    \includegraphics[width=0.33\textwidth]{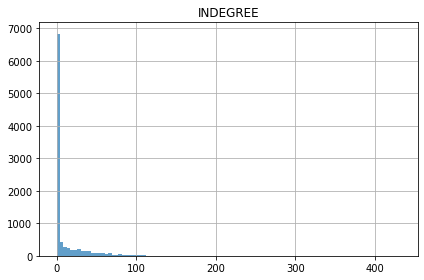}
    \caption{Histograms of indegree of the nodes of the Bitcoin OTC, Bitcoin Alpha, RFA Net networks.}
    \label{fig:indeg}
\end{figure*}

\begin{table}[H]
\begin{tabular}{llll} \hline
k & Bitcoin OTC & Bitcoin Alpha & RFA Net \\
\\ \hline
$1$ & $24$ &  $24$ & $25$  \\ 
$2$ & $21$ &  $24$ & $25$  \\
$3$ & $26$  &  $24$ & $25$  \\  
$4$ & $25$  &  $24$ & $25$  \\  
$5$ & $24$  &  $24$ & $25$  \\ 
$6$ & $23$  &  $24$ & $25$  \\  
$7$ & $22$  &  $24$ & $25$  \\ 
\hline                                   
\end{tabular}
\caption{Number of samples used to simulate direct/indirect established/not established attacks.}
\label{tab:nums}
\end{table}


\begin{center}
\begin{table}[H]
\begin{tabular}{cccc} \hline
Statistic & Bitcoin OTC & Bitcoin Alpha & RFA Net \\
\\ \hline
Size & $5881$ & $3783$ & $9654$ \\
Edges & $35592$ & $24186$ & $104554$ \\
Positive edges & $89,90\%$ & $93,64\%$ & $84\%$\\
Small in-degree $<10\%$ & $87,40\%$ & $85,70\%$ & $76\%$\\
Fair nodes $\geq 0.95$ & $99,45\%$ & $99,65\%$ & $99,99\%$ \\
Fair nodes $\geq 0.7$ & $100\%$ & $100\%$ & $100\%$\\
Goodness score $\geq 0$ & $85.85\%$ & $92.36\%$ & $93\%$\\
Goodness score $\geq 0.5$ & $2.2\%$ & $3.3\%$ & $67\%$\\
Goodness score $\leq -0.3$  & $8.2\%$ & $3.8\%$ & $0.2\%$\\

\hline
\end{tabular}
\caption{Statistics of the networks used for simulations.}
\label{tab:stats}
\end{table}
\end{center}
\section{Code}
The code package that allows running simulations presented in this paper is available under this link \url{https://github.com/irtomek/WeightPredictionsCode}.

\begin{figure*}[t]
    \centering
    \text{Bitcoin OTC} \\
    \includegraphics[width=0.33\textwidth]{images/d1.png}
    \includegraphics[width=0.33\textwidth]{images/d2.png}
    \includegraphics[width=0.33\textwidth]{images/ds.png} \\
    

    \text{Bitcoin Alpha} \\
    \includegraphics[width=0.33\textwidth]{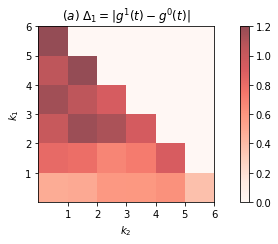}
    \includegraphics[width=0.33\textwidth]{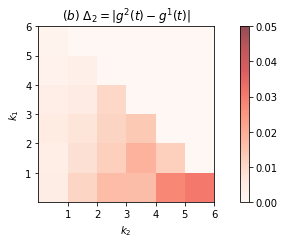}
    \includegraphics[width=0.33\textwidth]{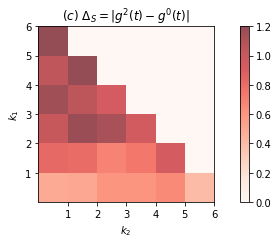} \\
    
    \text{RFA Net} \\

    \includegraphics[width=0.33\textwidth]{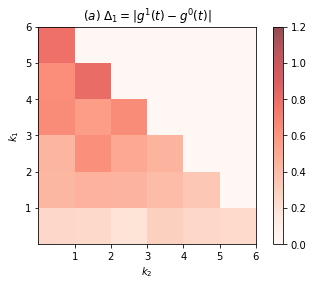}
    \includegraphics[width=0.33\textwidth]{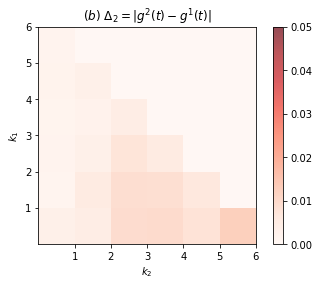}
    \includegraphics[width=0.33\textwidth]{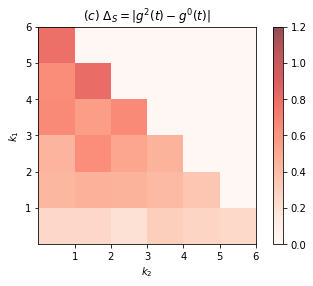}
    \caption{The results for the mixed settings in Bitcoin OTC, Bitcoin Alpha, RFA Net. The average strength of an indirect attack is small and significantly smaller that the average strength of a direct attack. $\Delta_1$ shows the influence of the attack with $k_1$ direct edges, $\Delta_2$ shows the influence of the attack with $k_2$ indirect edges, $\Delta_s$ shows the influence of the attack with $k_1$ direct edges and $k_2$ indirect edges. }
    \label{fig:deltaa}
\end{figure*}

\end{document}